\def\endthebibliography{%
	\def\@noitemerr{\@latex@warning{Empty `thebibliography' environment}}%
	\endlist
}
\newtheorem{theorem}{Theorem}
\newtheorem{lemma}{Lemma}
\theoremstyle{definition}
\newtheorem{definition}{Definition}
\theoremstyle{remark}
\def\x{{\textbf x}}
\def\o{{\textbf o}}
\def\s{{\textbf s}}
\def\u{{\textbf u}}
\def\NkSMK{\mathsf{\mbox{non-monotone } kSMK}}
\def\kSMK{\textsf{kSMK}}
\def\kIMK{\textsf{kIMK}}
\def\kSPK{\textsf{kSPK}}
\def\LAA{\textsf{LAA}}
\def\RLA{\textsf{RLA}}
\def \y{{\textbf y}}
\def \0{{\textbf 0}}
\def \opt{{\mathsf{opt}}}
\def \eE{\mathbb{E}}
\begin{document}
	\title{Robust Approximation Algorithms for Non-monotone $k$-Submodular Maximization under a Knapsack Constraint}
	\author{\IEEEauthorblockN{1\textsuperscript{st} Dung T.K. Ha}
		\IEEEauthorblockA{\textit{Faculty of Information Technology}\\
			\textit{VNU University of Engineering and Technology}\\
			Hanoi, Vietnam\\
			20028008@vnu.edu.vn}
		\and
		\IEEEauthorblockN{2\textsuperscript{nd} Canh V. Pham (\Letter)}
		\IEEEauthorblockA{\textit{ORLab, Phenikaa University} \\
			Hanoi, Vietnam \\
			canh.phamvan@phenikaa-uni.du.vn}
		\and
		\IEEEauthorblockN{3\textsuperscript{rd} Tan D. Tran}
		\IEEEauthorblockA{\textit{Faculty of Information Technology}\\
			\textit{VNU University of Engineering and Technology}\\
			Hanoi, Vietnam\\			
			22027005@vnu.edu.vn}
		\and
		\IEEEauthorblockN{4\textsuperscript{th}Huan X. Hoang}
		\IEEEauthorblockA{\textit{Faculty of Information Technology}\\
			\textit{Halong University}\\ Quang Ninh, Vietnam\\
			and \textit{VNU University of Engineering and Technology}\\
			Ha Noi, Vietnam\\
			huanxhoang@vnu.edu.vn}
		%\and
	}
	\maketitle
	\begin{abstract}
		The problem of non-monotone $k$-submodular maximization under a knapsack constraint ($\kSMK$) over the ground set size $n$ has been raised in many applications in machine learning, such as data summarization, information propagation, etc.  However, existing algorithms for the problem are facing questioning of how to overcome the non-monotone case and how to fast return a good solution in case of the big size of data. This paper introduces two deterministic approximation algorithms for the problem that competitively improve the query complexity of existing algorithms.
		Our first algorithm, $\LAA$,  returns an approximation ratio of $1/19$ within $O(nk)$ query complexity. The second one, $\RLA$, improves the approximation ratio to $1/5-\epsilon$ in $O(nk)$ queries, where $\epsilon$ is an input parameter. 
		Our algorithms are the first ones that provide constant approximation ratios within only $O(nk)$ query complexity for the non-monotone objective. They, therefore, need fewer the number of queries than state-of-the-the-art ones by a factor of $\Omega(\log n)$.
		
		Besides the theoretical analysis, we have evaluated our proposed ones with several experiments in some instances:  Influence Maximization and Sensor Placement for the problem. The results confirm that our algorithms ensure theoretical quality as the cutting-edge techniques and significantly reduce the number of queries.
	\end{abstract}
	
	\begin{IEEEkeywords}
		Approximation algorithm,	$k$-submodular maximization, knapsack constraint, non-monotone.
	\end{IEEEkeywords}
	\section{Introduction}
	\quad
	\label{sec:intro}
	%The problems of constrained $k$-submodular function maximization have played an important role in combinatorial optimization and machine learning recently because of their natural usages in various domains 
	% Due to their widespread natural applications across numerous domains, constrained 
	$k$-submodular is a generalized version of submodular in polyhedra \cite{Lovasz82} in which some properties of submodularity have deep theoretical extensions to $k$-submodularity that challenge researchers to study \cite{bi-sub-12,ksub_uncon_soda14,ksub-globecom19}, etc. Maximizing a $k$-submodular function subject to some constraints has recently become crucial in combinatorial optimization and machine learning such as influence maximization via social networks \cite{ksub-nip15,ksub-icml20,ksub-tevc18,ksub-stream-icml20}, sensor placement \cite{ksub-nip15,ksub-icml20,ksub-tevc18}, feature selection \cite{bi-sub-12} and information coverage maximization \cite{ksub-tevc18}, etc. Given a finite ground set $V$ with $|V|=n$, and an integer number $k$, let $[k]=\{1, 2, \ldots, k\}$, and  $(k+1)^V=\{(V_1, V_2, \ldots, V_k)| V_i \subseteq V, \forall i \in [k], V_i \cap V_j =\emptyset, \forall i \neq j\}$ be a family of $k$ disjoint sets, called the \textbf{$k$-set}. We have the following definition of the $k$-submodular function:
	\begin{definition}[$k$-submodularity \cite{ksub_uncon_soda14}]
		A function $f: (k+1)^V \mapsto \mathbb{R}_+$ is $k$-\textbf{submodular }iff for any $\x=(X_1, X_2, \ldots, X_k)$ and $\y=(Y_1, Y_2, \ldots, Y_k)$ $\in  (k+1)^V$, we have:
		\begin{align}
			f(\x)+ f(\y) \geq f(\x \sqcap \y) + f(\x \sqcup \y )
		\end{align}
		where
		$$\x \sqcap \y=(X_1 \cap Y_1, \ldots, X_k \cap Y_k)$$ and 	$$\x \sqcup \y=(Z_1, \ldots, Z_k),\ \mbox{where} \ Z_i= X_i \cup Y_i \setminus (\bigcup_{j \neq i} X_j \cup Y_j)$$ 
	\end{definition}
	%For any $\x \in (k+1)^V$, $e\notin supp(\x)$ and $i \in [k]$, we have the \textit{marginal gain} when adding an element $e$ to the $i$-set $X_i$ of $\x$ is:
	%\begin{align*}
	%	\Delta_{(e, i)} f(\x)=&f(X_1, \ldots, X_{i-1}, X_i \cup \{e\}, X_{i+1}, \ldots, X_k) 
	%	\\
	%	&-f(X_1, \ldots, X_k)
	%\end{align*}
	In this paper, we consider the problem of $k$-Submodular Maximization under a Knapsack constraint ($\kSMK$) %which is one of the most natural and general $k$-submodularity optimizations since it captures the limitation of budget, time, or size when choosing elements.
	which is defined as follows:
	\begin{definition}[The $k$-Submodular Maximization under a Knapsack constraint ($\kSMK$) problem]
		Under the knapsack constraint, each element  $e$ is assigned a positive cost $c(e)$. Given a limited budget $B>0$, the problem $\kSMK$ asks to find a $k$-set $\x=(X_1, X_2, \ldots, X_k)$ with total cost $c(\x)=\sum_{e \in X_i, i\in [k]}c(e)\leq B$ so that $f(\x)$ is maximized. 
	\end{definition}
	The problem $\kSMK$ is a general model applied to a lot of essential instances such as $k$-topic influence maximization, $k$-type sensor placement, $k$-topic information coverage maximization \cite{Wang_kInfoCover,QianSTZ18,DSAA22}, etc., with the knapsacks that encode users’ constraints including budget, time or size. For example, \textbf{$k$-topic influence maximization under knapsack constraint ($\kSMK$)} \cite{ksub_uncon_soda14,QianSTZ18,ksub-icml20}, the problem asks for maximizing the expected number of users, who are influenced by at least one of $k$ distinct topics with a limited budget $B>0$. The mathematical nature is $k$-submodular maximization under a diffusion model, which Kempe {\em et al.}\cite{kem03} first proposed with a single type of influence. %Authors in \cite{ksub-nip15} generalized this model to allow $k\geq2$ types of influence, then other authors were also attracted by $k$-topic influence maximization \cite{ksub-icml20,ksub-tevc18,ksub-stream-icml20}. 	
	%%%% The needs of linear query complexity desterministc alg
	
	The challenge when providing a solution for $\kSMK$ is it has many candidate approximate solutions with different sizes. We have to select the best nearly optimal one within polynomial time. Therefore, beyond obtaining a nearly optimal solution to $\kSMK$ in the aforementioned applications, designing such a solution must also minimize the query complexity, especially for big data, since the tremendous amount of input data makes the search space for a solution crazily soar. Unfortunately, $k$-submodularity requires an algorithm to evaluate the objective function whenever observing an incoming element. Therefore, it is necessary to design efficient algorithms in reasonable computational time. We refer to the \textit{query complexity} as a  measure of computational time since it dominates the time running of an algorithm. Previous works \cite{TANG202228,ksub-knap,DSAA22} proposed efficient algorithms for $\kSMK$ in which even some algorithms can provide solutions in linear query complexity of $O(kn)$. However, these works are just available for the monotone case. Meanwhile, some works \cite{Feige_nonmono_submax,Buchbinder_Querytradeoff,fast_icml} showed that the $k$-submodular objective function might be non-monotone in practical applications. Therefore, solving the $\NkSMK$ problem within linear query complexity is critical.
	
	Overall, this paper aims to tackle both challenges above for non-monotone $k$-submodular maximization and constrained by a knapsack.   %For instance, data summarization is a kind of submodular optimization problem\cite{data-sum-1,data-sum-2,ap-datasum,fast_icml,TschiatschekIWB14}. It explores a representative subset for a whole set of data points. The representative subset satisfies two conditions which are coverage and diversity \cite{TschiatschekIWB14}. However, authors \cite{fast_icml} analyzed that high coverage tended to select more elements, whereas higher diversity pointed to eliminating too many similar elements appearing in the representative subset and preventing the summary from growing too large. Hence, the objective function designed to measure both coverage and diversity is naturally non-monotone. Consequently, the $k$-submodular objective function may be non-monotone. 
	\subsection{Our contribution}
	In this work, we design novel approximate algorithms that respond to some requirements about providing considerable solution quality and reducing query complexity. In particular, our work is the first one that provides a constant approximation ratio within only $O(kn)$ query complexity for $\NkSMK$. The main version, $\RLA$ returns an approximation ratio of $1/5-\epsilon$ which is equivalent to the state-of-the-art one proposed in  \cite{Canh_joco21}.
	In general, our contributions are as per the following:
	\begin{itemize} 
		\item We first propose the $\LAA$ algorithm (Algorithm~\ref{alg:1}), a $1/19$-approximation one that scans a single pass over the ground set within $O(kn)$ query complexity. It's the first simple but vital algorithm of our work since it limits the range of the optimal value. Besides, it provides a data division strategy to reduce query complexity to $O(nk)$. 
		\item We next propose $\RLA$ algorithm (Algorithm~\ref{alg:2}) that achieves an approximation ratio $1/5-\epsilon$, and requires $O(kn/\epsilon)$ query complexity where $\epsilon>0$ is an accuracy parameter. 
		Specifically, to the best of our knowledge, our algorithm is also equivalent to the current best approximation ratio of a deterministic algorithm for the studied problem in~\cite{Canh_joco21}.  		
		\item To illustrate the theoretical contributions, we conduct several comprehensive experiments in two applications of $\kSMK$ including $k$-topic Influence Maximization and $k$-type Sensor Placement. Experimental results have shown that our algorithms save queries more than state-of-the-art (mentioned in Table~\ref{tab:1}) and return comparable results in terms of performance.
	\end{itemize} 
	Table~\ref{tab:1} compares our algorithms with some state-of-the-art algorithms for $\NkSMK$ on three aspects, including approximation ratio, query complexity, and deterministic or not. These fields indicate that our algorithms have both a low number of queries and valuable deterministic approximation ratios that are equivalent to or even better than the others.
	\begin{table*}[hpt]
		\centering
		\footnotesize
		\begin{tabular}{llll}
			\hline
			\textbf{Reference} &  \textbf{Approximation ratio} & \textbf{Query complexity}&\textbf{Is deterministic?}
			\\
			\hline
			\textbf{$\LAA$ (Alg.~\ref{alg:1}, this paper)}&  $1/19$ & $O(kn)$ & Yes
			\\
			\textbf{$\RLA$ (Alg.~\ref{alg:2}, this paper)} &  $1/5-\epsilon$ & \textbf{$O(kn/\epsilon)$} & Yes
			\\
			%\textbf{$\RLAP$ (Alg.~\ref{alg:3}, this paper)} & $1/3-\epsilon$ &$O(kn \log(1/\epsilon)/\epsilon)$ & Deterministic   
			%\\	
			Deterministic Streaming\cite{Canh_joco21}    & $1/5-\epsilon$ & \textbf{$O(kn\log(n)/\epsilon)$} & Yes
			\\	
			Random Streaming\cite{Canh_joco21}    & $k/(5k-2)-\epsilon$ & $O(kn\log(n)/\epsilon)$ &  No
			%\\
			%Greedy\cite{TANG202228}   & $1/2-1/(2e)$ & $O(n^4k^3)$ & Yes
			%\\
			%Algorithm 3 \cite{ksub-knap} & $1/2$ & $poly(n)$ & Random & Yes
			\\
			\hline
		\end{tabular}
		\caption{Algorithm comparison for $\NkSMK$; Note that Desterministic Streaming and Random Streaming in \cite{Canh_joco21} is the special case when $\beta=1$.}
		\label{tab:1}
	\end{table*}
	
	\textit{Organization} The rest of the paper is organized as follows: We provide a literature review and discussions in Section~\ref{sec:related}.
	The notations and properties of $k$-submodular functions are presented in Section~\ref{sec:pre}. Section~\ref{sec:alg} presents our algorithms and theoretical analysis. The extensive experiments are shown in Section~\ref{sec:ex}. Finally, we conclude this work in Section~\ref{sec:con}.
	\section{Related work}
	\label{sec:related}
	In this section, we review related works and provide some discussion on existing algorithms. 
	
	Studying $k$-submodular functions appears when considering the submodularity in polyhedra. Lov{\'{a}}sz \cite{Lovasz82} found that it was a similar but deeper theory than submodularity when working with intersection matroids. %After that, people considered a bisubmodular function to answer the above question that created the foundation for $k$-submodularity research. Singh {\em et al.}\cite{bi-sub-12} worked with bisubmodular maximization, which meant $k$-submodularity with $k=2$. 
	After that, more works have focused on the issue of $k$-submodularity with general $k\ge 2$. %Certainly, with $k=1$, the issue becomes submodular maximization. As submodular maximization is NP-hard, $k$-submodular maximization is also NP-hard.	
	%	People studied $k$-submodularity with both unconstrained maximization \cite{ksub_uncon_soda14,ksub_uncon_soda16} and maximization under constraints such as cardinality \cite{ksub-icml20,ksub-nip15,ksub-tevc18,ksub-stream-icml20}, knapsack \cite{Canh_joco21,TANG202228}, and matroid \cite{ksub-jdo16,ksub_matroid-icml20}.
	First, people studied maximizing unconstrained $k$-submodular~\cite{ksub_uncon_soda14, ksub_uncon_soda16,ksub_iwoca-Oshima17}. %studied to . They devised a deterministic Greedy algorithm with an approximation ratio of $1/3$. Next, the authors in~\cite{ksub_uncon_soda16} presented a random Greedy approach based on a probability distribution which improved the approximation ratio to $k/(2k-1)$. %select which element has a larger marginal gain with higher probability. Oshima {\em  et al.}\cite{ksub_iwoca-Oshima17} %eliminated the random told in \cite{ksub_uncon_soda16}proposed a derandomization method, however, the number of queries expanded to $O(n^2k^2)$. 
	%The constrained maximizing of the $k$-submodular function has been researched further. 
	Due to the practical values when solving the problem with constraints, some authors focused on $k$-submodular maximization under some kinds of constraints \cite{ksub-nip15,ksub-jdo16, ksub_matroid-icml20}, etc. Authors focused on the monotone case such as Oshaka {\em et al.}~\cite{ksub-nip15} studied monotone $k$-submodular maximization with two kinds of size constraint: overall size constraint and singular size constraint, authors in~\cite{QianSTZ18} proposed a multi-objective evolutionary method to provide an approximation ratio of $1/2$ for the monotone $k$-submodular maximization problem with the overall size constraint. However, this algorithm took a high query complexity of $O(kn\log^2B)$ in expectation. Authors~\cite{ksub-Soma19} further proposed an online algorithm with the same approximation ratio of $1/2$ but runs in polynomial time with regret bound. However, these contributions just work for the monotone case and for size constraints; hence, it's hard to adapt to $\NkSMK$. Moreover, these algorithms required exponential running time~\cite{ksub-nip15} or high query complexity~\cite{QianSTZ18}. 
	
	Recently, Nguyen {\em et al.}\cite{ksub-stream-icml20} first applied streaming to solve the problem of $k$-submodular maximization with overall size constraint. Streaming fashion is an active approach when it requires only a small amount of memory to store data and scans one or a few times over the ground set $V$. They devised two streaming algorithms within $O(nk\log(k))$ query complexity. Their first one is deterministic and returns an approximation ratio of $1/3-\epsilon$, while the second one is randomized and returns an approximation ratio of $k/(3k-1)-\epsilon$. Later on, Ene and Nguyen~\cite{Ene22} developed a single-pass streaming algorithm based on integer programming formulation for $k$-submodular maximization with singular size constraint with an approximation ratio of $0.5/(1+B(2^{1/B}-1))$ within $O(nk)$ queries, where $B=\min_{i \in [k]}B_i$. 
	
	Unlike cardinality or matroid, which just enumerates elements, the knapsack requires maximizing $f(\cdot)$ subject to a given budget that the total cost of a solution can not exceed. %The Knapsack constraints do not allow for just enumerating elements like cardinality or matroid constraints. 
	Hence, there can be multiple maximal cost solutions that are not the same size. The authors \cite{ksub-knap} proposed a multi-linear extension method with an approximation ratio of $1/2-2\epsilon$ in expectation for the $\kSMK$. %They employed a 2-step method and constructed a continuous extension of the discrete problem. A rounding technique extracts a discrete solution from a fractional one after an optimization algorithm located an optimum in the continuous space. 
	This work provides the best approximation ratio in expectation. However, this algorithm is impractical due to the high query complexity of a continuous extension~\cite{BalkanskiQS21}.
	
	Besides, Wang {\em et al.}~\cite{TANG202228} proposed a $(1/2-1/(2e))$-approximation algorithm for the $\kSMK$ that inspired from the Greedy algorithm in~\cite{Sviridenko04}. This algorithm, however, requires an expensive query complexity of $O(n^4k^3)$, and therefore it is difficult to apply to medium-sized instances even though one can compute the objective function $f$ in $O(1)$ time. The authors~\cite{ksub-knap} proposed a multi-linear extension that provided the approximation ratio of $1/2-2\epsilon$ in expectation for the $\kSMK$. This work provides the best approximation ratio in expectation, however, it is impractical because of the high query complexity of a continuous extension~\cite{BalkanskiQS21}. Authors~\cite{DSAA22} first proposed a $(1/4-\epsilon)$-desterministic approximation algorithm within $O(kn/\epsilon)$. Nonetheless, the aforementioned works are not available for the non-monotone case. 
	
	To state the non-monotone $k$-submodularity, Pham {\em et al.}~\cite{Canh_joco21} recently have proposed two single-pass streaming algorithms for the $k$-submodular maximization under the budget constraint, a general of knapsack constraint within $O(nk\log(n)/\epsilon)$ queries. These algorithms returned the ratios of $1/5-\epsilon$ and $k/(5k-2)-\epsilon$ (in expectation) for the non-monotone case. Our best algorithm version, $\RLA$, gives an equivalent performance of them ($1/5-\epsilon)$ approximation ratio) yet reduces the query complexity to $O(kn/\epsilon)$. 
	
	On the whole, the characteristic of our algorithms is deterministic, linear query complexity, and available for non-monotonicity. 
	
	\section{Preliminaries}
	\label{sec:pre}
	\textbf{Notations.} 
	Given a ground set $V=\{e_1, e_2, \ldots, e_n \}$ and an integer $k$, we define $[k]=\{1, 2, \ldots, k\}$ and let $(k+1)^V=\{(V_1, V_2, \ldots, V_n)|V_i \subseteq V  \ \forall i \in [k], V_i\cap V_j=\emptyset \ \forall i \neq j\}$ be a family of $k$ disjoint subsets of $V$, called $k$-set.
	
	For $\x=(X_1, X_2, \ldots, X_k)\in (k+1)^V$, we define $supp_i(\x)=X_i$,  $supp(\x)=\cup_{i\in [k]}X_i$, $X_i$ as \textbf{$i$-th set of $\x$} and an empty $k$-set $\0=(\emptyset, \ldots, \emptyset)$.
	We set if $e \in X_i$ then $\x(e)=i$ and $i$ is called the \textbf{position} of $e$ in $\x$, otherwise $\x(e)=0$. Adding an element $e \notin supp(\x)$ into $X_i$ can be represented by $\x  \sqcup  (e, i) $. We also write $\x=\{(e_1, i_1), (e_2, i_2), \ldots, (e_t, i_t)\}$ for $e_j \in supp(\x), i_j=\x(e_j), \forall 1 \leq j\leq t$.
	When $X_i=\{e\}$, and $X_j= \emptyset, \forall j\neq i$, $\x$ is denoted by $(e,i)$.
	
	For 	$\x=(X_1, X_2, \ldots, X_k), \y=(Y_1, Y_2, \ldots, Y_k) \in (k+1)^V$, we denote by $\x \sqsubseteq \y$ iff $X_i \subseteq Y_i$ $\forall i\in [k]$. 
	
	\textbf{The objective function.} 
	The function $f: (k+1)^V \mapsto \mathbb{R}_+$ is $k$-\textbf{submodular} iff for any $\x=(X_1, X_2, \ldots, X_k)$ and $\y=(Y_1, Y_2, \ldots, Y_k)$ $\in  (k+1)^V$, we have:
	\begin{align}
		f(\x)+ f(\y) \geq f(\x \sqcap \y) + f(\x \sqcup \y )
	\end{align}
	where
	$$\x \sqcap \y=(X_1 \cap Y_1, \ldots, X_k \cap Y_k)$$ and 
	$$\x \sqcup \y=(Z_1, \ldots, Z_k),\ \mbox{where} \ Z_i= X_i \cup Y_i \setminus (\bigcup_{j \neq i} X_j \cup Y_j)$$ 
	For any $\x \in (k+1)^V$, $e\notin supp(\x)$ and $i \in [k]$, we have the \textit{marginal gain} when adding an element $e$ to the $i$-set $X_i$ of $\x$ is:
	\begin{align*}
		\Delta_{(e, i)} f(\x)=&f(X_1, \ldots, X_{i-1}, X_i \cup \{e\}, X_{i+1}, \ldots, X_k) 
		\\
		&-f(X_1, \ldots, X_k)
	\end{align*}
	In this work, we consider $f$ to be \textit{non-monotone}, i.e., the marginal gain when adding a tuple $(e,i)$ to set $\x$, $\Delta_{(e, i)} f(\x)$, may be negative. we also assume that $f$ is normalized, i.e, $f(\0)=0$, and there exists an \textit{oracle query}, which when queried with the $k$-set $\x$ returns the value $f(\x)$. We also recap some properties of the $k$-submodular function that will be used for designing our algorithms.
	
	From~\cite{ksub_uncon_soda14}, a $k$-submodular function $f: (k+1)^V \mapsto \mathbb{R}_+$ is $k$-submodular iff it is pairwise monotone and orthant submodular. The $k$-submodularity of $f$ implies the \textit{orthant submodularity}, i.e., 
	\begin{align}
		\Delta_{(e,i)}f(\x) \geq \Delta_{(e,i) } f(\y)
	\end{align}
	for any $\x, \y \in (k+1)^V$ with $\x \sqsubseteq \y$, $e \notin supp(\y)$ and $i \in [k]$, and the \textit{pairwise monotonicity}, i.e.,
	
	\begin{align}
		\Delta_{(e,i)}f(\x) + \Delta_{(e,j)}f(\x) \geq 0 
	\end{align}
	for any $\x \in (k+1)^V$ with $e \notin supp(\x)$ and $i, j \in [k]$ with $i \neq j$.

	%In this paper, we assume that $f$ is normalized, i.e, $f(\0)=0$ and each element $e$ has a positive cost $c_i(e)$ to be added into $i$-th set of a solution and the total cost of $k$-set $\x$ is 
	%$$c(\x)=\sum_{i \in [k], e \in supp_i(\x)} c_i(e)$$
	%We define $\beta$ as the \textbf{largest gap between cost values of an element}, i.e, $$\beta=\max_{e\in V, i \neq j}\frac{c_i(e)}{c_j(e)}$$ 
	%Without loss of generality, throughout this paper, we assume that every element $e$ satisfies $c_i(e)\geq 1, \forall i \in [k]$ and $c_i(e)\leq B$ as otherwise we can simply remove it. 
	
	\textbf{The problem definition.} Assuming that each element $e$ is assigned a positive cost $c(e)$ and the total cost of a $k$-set $\x$ $c(\x)=\sum_{e \in supp(\x)}c(e)$. Given a limited budget $B>0$,  we assume that every item $e\in V$ satisfies $c(e) \leq  B$; otherwise, we can simply discard it. The $k$-Submodular Maximization under Knapsack constraint ($\kSMK$)  problem is to determine: 
	\begin{align}
		\arg \max_{\x \in (k+1)^V: c(\x)\leq B}f(\x).
	\end{align}
	It means the problem finds the solution $\x$ so that the total cost of $\x$ is less than or equal to $B$ so that $f(\x)$ is maximized. In this work,	we only consider $k\geq 2$ because if $k = 1$, the $k$-submodular function becomes the submodular function.
	
	%\textbf{$\kSMK$ problem.} The problem is defined as follows:
	%	\begin{definition}[$\kSMK$ problem]
		%	Given  a finite set $V$, a budget $B$ and a $k$-submodular function $f: (k+1)^V \mapsto \mathbb{R}_+$. The problem asks to find a solution $\s=(S_1, S_2, \ldots, S_k)$ in which an element $e \in V$ has a cost $c(e)>0$ when added into $S_i$, with total cost $c(\s)=\sum_{i \in [k], e \in supp_i(\s)}(e)\leq B$ so that $f(\s)$ is maximized.
		%	\end{definition}We denote by $\o=\{(o_1, i^*_1), \ldots, (o_m, i^*_m)\}$ an optimal solution of the problem, the optimal value $\opt=f(\o)$ and  $m=supp(\o)$. Without loss of generality, we assume that  $c(o_1)\geq c(o_2)\geq \ldots \geq c(o_m)$.
	\section{The algorithms}
	\label{sec:alg}
	In this section, we introduce two deterministic algorithms for $\kSMK$. The first algorithm, named \textbf{Linear Approximation Algorithm} ($\LAA$), has an approximation ratio of $1/19$  and takes $O(nk)$ query complexity. Although this approximation ratio is small, it is the \textbf{first one} that gives a constant approximation ratio within only $O(kn)$ queries for the non-monotone case. The approximation ratio is improved by our second algorithm, named \textbf{Robust Linear Approximation} ($\RLA$), from $1/19$ to $1/5-\epsilon$ by recalling the first algorithm's solution to provide a suitable range for bounding the optimal value $\opt$. Additionally, it scans the ground set $O(1/\epsilon)$ times and integrates the decreasing threshold strategy to get the near-optimal solution. 
	\subsection{Linear Approximation Algorithm}
	Our $\LAA$ algorithm adapts the idea of the recent work \cite{DSAA22} that (1) divides the ground set $V$ into two subsets: The elements with costs greater than $B/2$ are included in the \textbf{first subset},  while the remaining is included in the \textbf{second}, and (2) near-optimal solutions are sought and combined for the two aforementioned subsets. 
	\begin{algorithm}[h]
		\SetNlSty{text}{}{:}
		%	\begin{algorithmic}[1]
			\KwIn{$V$, $f$, $k$, $B>0$.}
			\KwOut{A solution $\s$}
			$\x \leftarrow \0$; $(e_{m}, i_{m}) \leftarrow (\emptyset, 1)$;            $\x'\leftarrow \0$; %\y\leftarrow 0$;
			\\
			\ForEach{$e \in V$}
			{
				$i_e \leftarrow \arg \max_{i \in [k]} f((e, i))$ \label{cond1}
				\\
				$(e_{m}, i_{m}) \leftarrow \arg\max_{(e', i')\in \{(e_m, i_m), (e, i_e)\}}f((e', i'))$
				\\
				\If{$c(e)\leq B/2$}
				{
					\If{$\Delta_{(e,i_e)}f(\x) \ge c(e)f(\x)/B$} 
					{ \label{cond2}
						$\x \leftarrow \x \sqcup  (e, i_e)$
					}
				}
			}
			$\x' \leftarrow \arg \max_{\x_j: j\leq t_x, c(\x_j)\leq B }c(\x_j)$, where $t_x=|supp(\x)|$  and  $\x_j=\{(e_{t_x-j+1}, i_{t_x-j+1}),(e_{t_x-j+2}, i_{t_x-j+2}), \ldots, (e_{t_x}, i_{t_x})\}$ is the last $j$ tuples added into $\x$.\\
			$\s \leftarrow \arg\max_{\s \in \{(e_{m} , i_{m}), \x'\}}f(\s)$
			\\
			\Return $\s_{final} $
			\caption{An Linear Approximation Algorithm ($\LAA$)}
			\label{alg:1}
		\end{algorithm}
		
		In particular, the algorithm first receives an instance $(V, f, k, B)$ of $\kSMK$ and initiates a candidate solution $\x$ as an empty set and a tuple $(e_m, i_m)$ as $(\emptyset, 1)$. The target of the tuple $(e_m, i_m)$ is to update the optimal solution found in the first subset, while the candidate solution $\x$ is to locate what solution is close to the optimal in the second. For each incoming element $e$, the algorithm finds ``the best" position $i_e$ in terms of the set $i$ in $k$ sets that returns the highest value $f((e, i_e))$. If its cost is greater than $B/2$, the role of $(e_m, i_m)$ is the best solution on the current first subset (line 5). Otherwise, the algorithm adds the tuple $(e,i_e)$ into  $\x$ if the condition $\Delta_{(e,i_e)}f(\x) \geq c(e)f(\x)/B$ is maintained. After the main loop completes, the algorithm selects a $k$-set $\x'$ as the set of last $j$ tuples adding into $\x$ with the maximum total cost nearest to $B$ (line 11). Finally, the algorithm returns the final solution $\s$ as the best one between $(e_m, i_m)$ and  $\x'$.  The details of the algorithm are fully presented in Algorithm~\ref{alg:1}.
		
	%	At a high level, the spirit of our algorithm is to resemble the ``divide and conquer" strategy in which it uses an appropriate subset division based on the costs of elements. The division of the ground set is productive because, during the linear time, the algorithm both finds the optimal solution on the first subset since feasible solutions have at most one element and finds the approximate solution on the second one. For the second subset, we were inspired by the suggestion from the idea of Kuhnle  {\em et al. }\cite{kuhnle_quickksubmono} in which elements with marginal gains that are over the \textit{ratio between the objective value of the current solution and the limited cost}, will be kept, and the remaining is released. Their idea is powerful in diminishing the number of queries of a constant factor approximation algorithm. However, to deal with the cost and the $k$-submodular function, we need to make a non-trivial analysis to give an approximation ratio.
		
		To deal with the non-monotonicity of the objective function, we have to use non-trivial analyzes to give an approximation. Differing from the monotone case in \cite{DSAA22}, we use the property of pairwise monotonicity as a critical component in our theoretical analysis. In the following, we analyze the theoretical guarantee of the Algorithm~\ref{alg:1}.
		We first define the notations as follows:  
		\begin{itemize}
			\item[$\bullet$] $V_1=\{e\in V: c(e) >B/2\}, V_2=\{e\in V: c(e) \leq B/2\}$. 
			\item[$\bullet$] $\o$ is an optimal solution of the problem over $V$ and the optimal value $\opt =f(\o)$.
			\item[$\bullet$] $\o'_1= \{(e, \o(e)) : e \in V_1 \}, \o'_2= \{(e, \o(e)): e \in V_2 \}$.
			\item[$\bullet$] $\o_1$ is an optimal solution of the problem over $V_1$.
			\item[$\bullet$] $\o_2$ is an optimal solution of the problem over $V_2$.
			\item[$\bullet$] $(e_j, i_j)$ as the $j$-th element added of the main loop of the Algorithm \ref{alg:1}.
			\item[$\bullet$] $\x=\{(e_1, i_1), \ldots, (e_t, i_t)\}$ the $k$-set $\x$ after ending the main loop, $t=|supp(\x|$.
			\item[$\bullet$]  $\x^j=\{(e_1, i_1),\ldots, (e_j, i_j)\}$: the $k$-set $\x$ (in the main loop) after adding $j$ elements $1\leq j\leq t$, $\x^0=\0$, $\x^t=\x$.
			\item[$\bullet$]  $\x_j=\{(e_{t-j+1}, i_{t-j+1}),(e_{t-j+2}, i_{t-j+2}), \ldots, (e_t, i_t)\}$ is the set of last $j$ elements added into $\x$.
			\item[$\bullet$] $\o_2^j=(\o_2 \sqcup \x^j ) \sqcup \x^j$.
			\item[$\bullet$] $\o_2^{j-1/2}=(\o_2 \sqcup \x^j ) \sqcup \x^{j-1}$. 
			\item[$\bullet$] $\x^{j-1/2}$: If $e_j \in supp(\o_2)$, then $\x^{j-1/2}=\x^{j-1} \sqcup (e_j, \o_2(e_j)) $. If $e_j \notin supp(\o_2)$, $\x^{j-1/2}=\x^{j-1}$.
			\item[$\bullet$] $\u^t=\{(u_1, i_1), (u_2, i_2), \ldots, (u_r,i_r) \}$ is a set of elements that are in $\o_2^t$ but not in $\x^t$, $r=|supp(\u^t)|$.
			\item[$\bullet$] $\u^t_l=\x^t \sqcup \{(u_1, i_1), (u_2, i_2), \ldots, (u_l,i_l) \},  1 \leq l\leq r$ and $\u^t_0=\x^t$.
		\end{itemize}
		%	Let $\s'$ be the remaining in $\s$ after pushing last elements from $\s$ into $\s$ such that the total cost of $\s$ does not exceed $B$. It means $\s=\s' \sqcup \s$. 
		Supposing that $\x'$ gets $T$ last tuples in $\x$, i.e.,  $\x'=\x_T$. Denote $Q=t-T$, we have $\x=\x^Q\sqcup\x'$. The following Lemmas connect the candidate solution $\x$ with $\o_2$.
		\begin{lemma}
			$f(\o_2)-f(\o_2^j) \leq 2f(\x^j)$ for all $0\leq j\leq t$.
			\label{lem:1}
		\end{lemma}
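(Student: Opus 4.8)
The plan is to prove the inequality by telescoping the decrease of $f$ along the chain $\o_2=\o_2^0,\o_2^1,\ldots,\o_2^j$, in which the $l$-th step forces the optimum to agree with the algorithm on the $l$-th processed element $e_l$. Since $\x^0=\0$ gives $\o_2^0=\o_2$ and since $\o_2^l\sqsupseteq\x^l$ for every $l$, the left-hand side is exactly the total distortion incurred while dragging $\o_2$ onto $\x^j$. I would write $f(\o_2)-f(\o_2^j)=\sum_{l=1}^{j}\left(f(\o_2^{l-1})-f(\o_2^l)\right)$ and insert the half-step $\o_2^{l-1/2}$, so that each summand splits into the cost of first deleting $e_l$ from its $\o_2$-slot and then inserting it into the slot $i_l$ produced by the algorithm; the auxiliary $\x^{l-1/2}=\x^{l-1}\sqcup(e_l,\o_2(e_l))$ mirrors this deletion/insertion on the algorithm's side.

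For a single step I would split into three cases according to the position of $e_l$ relative to $\o_2$: (i) $e_l\notin supp(\o_2)$, (ii) $\o_2(e_l)=i_l$, and (iii) $\o_2(e_l)=m\neq i_l$. Case (ii) is null, since then $\o_2^{l-1}=\o_2^{l-1/2}=\o_2^l$. In case (i) the step equals the single insertion loss $-\Delta_{(e_l,i_l)}f(\o_2^{l-1})$, whereas in case (iii) it equals $\Delta_{(e_l,m)}f(\o_2^{l-1/2})-\Delta_{(e_l,i_l)}f(\o_2^{l-1/2})$, a deletion gain at slot $m$ minus an insertion loss at slot $i_l$. The factor $2$ in the statement is meant to come from pairwise monotonicity: since the two marginals $\Delta_{(e_l,i_l)}f$ and $\Delta_{(e_l,m)}f$ at a fixed context sum to a nonnegative quantity, the insertion loss is dominated by a sibling deletion gain, so each step is controlled by twice a single marginal.

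Next I would transport the bound from the context $\o_2^{l-1/2}$ down to the algorithm's context $\x^{l-1}$. Because $\x^{l-1}\sqsubseteq\o_2^{l-1/2}$ and $e_l\notin supp(\o_2^{l-1/2})$, orthant submodularity yields $\Delta_{(e_l,\cdot)}f(\o_2^{l-1/2})\leq\Delta_{(e_l,\cdot)}f(\x^{l-1})$, which rewrites the per-step bound through $\x^{l-1/2}$ and, ultimately, through the greedy increments $\Delta_{(e_l,i_l)}f(\x^{l-1})=f(\x^l)-f(\x^{l-1})$. Summing the telescoped gains $f(\x^j)=\sum_{l\le j}\Delta_{(e_l,i_l)}f(\x^{l-1})$ should then deliver $2f(\x^j)$. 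I would package the bookkeeping as an induction on $j$, carrying the hypothesis $f(\o_2)-f(\o_2^{l})\le 2f(\x^{l})$ and reducing the inductive step to controlling the single summand $f(\o_2^{l-1})-f(\o_2^l)$.

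The hard part will be case (iii), the genuine position swap, which is exactly where non-monotonicity bites and where the monotone analysis of \cite{DSAA22} no longer applies: the marginal that is deleted sits in the slot $m=\o_2(e_l)$ prescribed by the optimum, whereas the greedy only ever credits the slot $i_l$ chosen as $\arg\max_{i\in[k]}f((e_l,i))$. Reconciling these two slots, so that the deletion gain at $m$ is charged against the greedy increment at $i_l$ while the whole sum stays within $2f(\x^j)$, is the delicate point; it is precisely here that I expect to need orthant submodularity (to lower the contexts onto $\x^{l-1}$) and pairwise monotonicity (to supply the factor $2$) simultaneously, together with the density admission rule $\Delta_{(e_l,i_l)}f(\x^{l-1})\ge c(e_l)f(\x^{l-1})/B$ that governed the entry of $e_l$ into $\x$.
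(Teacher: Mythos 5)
Your proposal follows essentially the same route as the paper's proof: the same telescoping over $f(\o_2^{l-1})-f(\o_2^l)$ with the half-step $\o_2^{l-1/2}$, the same case split on whether $e_l$ lies in $supp(\o_2)$ and whether the slots agree, pairwise monotonicity supplying the factor $2$, and orthant submodularity plus the algorithm's selection of $i_l$ transferring the per-step bound onto the greedy increments $f(\x^l)-f(\x^{l-1})$. The ``delicate point'' you flag in case (iii) is resolved exactly as you anticipate, so your plan is correct and matches the paper.
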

		\begin{proof} See the Appendix, section~\ref{sec:proof}
		\end{proof}
		\begin{lemma}
			$f(\x')\geq f(\x^t)/3$.
			\label{lem:2}
		\end{lemma}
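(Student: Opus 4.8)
The plan is to split on whether the suffix $\x'$ selected in line 11 already equals the whole $k$-set $\x^t$. If $c(\x^t)\le B$, then $\x^t$ is itself a suffix of cost at most $B$ and has the largest cost among all suffixes, so the maximizer defining $\x'$ is $\x^t$ and the inequality is trivial. Hence I would concentrate on the nontrivial case $Q=t-T\ge 1$, where $\x'=\x_T$ is a proper suffix and consequently $c(\x^t)>B$; here I write $\x^t=\x^Q\sqcup\x'$ and aim to control the prefix value $f(\x^Q)$.

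The first ingredient turns the acceptance rule into multiplicative growth. Each accepted element satisfies $\Delta_{(e_j,i_j)}f(\x^{j-1})\ge c(e_j)f(\x^{j-1})/B$, so $f(\x^j)\ge (1+c(e_j)/B)f(\x^{j-1})$; note the threshold forces nonnegative marginals, so $f$ is nondecreasing along the accepted sequence and $f\ge 0$ everywhere despite non-monotonicity of the objective. Chaining from $j=Q+1$ to $t$ and using $\prod(1+a_j)\ge 1+\sum a_j$ for nonnegative $a_j=c(e_j)/B$ gives $f(\x^t)\ge f(\x^Q)\bigl(1+c(\x')/B\bigr)$, since $\sum_{j=Q+1}^{t}c(e_j)=c(\x')$. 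The crucial cost fact is $c(\x')>B/2$: because $\x'=\x_T$ is the maximum-cost suffix of cost at most $B$ and $Q\ge 1$, the one-longer suffix $\x_{T+1}$ (which prepends the element $e_Q\in V_2$, so $c(e_Q)\le B/2$) must overshoot, i.e. $c(\x')+c(e_Q)>B$, whence $c(\x')>B/2$. Therefore $1+c(\x')/B>3/2$ and $f(\x^Q)\le \tfrac{2}{3}f(\x^t)$.

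The second ingredient compares building $\x'$ from scratch with building it on top of $\x^Q$. Writing $f(\x^t)-f(\x^Q)$ and $f(\x')$ as telescoping sums over $j=Q+1,\dots,t$, each partial suffix $\{(e_{Q+1},i_{Q+1}),\dots,(e_{j-1},i_{j-1})\}$ satisfies $\sqsubseteq \x^{j-1}$ with $e_j\notin supp(\x^{j-1})$, so orthant submodularity gives the termwise inequality $\Delta_{(e_j,i_j)}f(\text{partial suffix})\ge \Delta_{(e_j,i_j)}f(\x^{j-1})$, and summing yields $f(\x')\ge f(\x^t)-f(\x^Q)$. Combining the two bounds gives $f(\x')\ge f(\x^t)-\tfrac{2}{3}f(\x^t)=\tfrac13 f(\x^t)$, as claimed. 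The step I expect to be the main obstacle is the cost bound $c(\x')>B/2$: it is the hinge that upgrades the growth factor to $3/2$, and it rests on reading line 11 correctly (the next backward element would push the suffix past $B$) together with the fact that every accepted element lies in $V_2$ and hence costs at most $B/2$; the orthant-submodularity comparison is then routine once the subset relations are set up, and nonnegativity of $f$ is what keeps the product estimate valid.
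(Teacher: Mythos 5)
Your proof is correct and takes essentially the same route as the paper's: the same case split on $c(\x^t)\le B$, the same use of the acceptance rule plus the cost bound $c(\x')>B/2$ to conclude $f(\x^Q)\le \tfrac{2}{3}f(\x^t)$, and the same $k$-submodularity step $f(\x')\ge f(\x^t)-f(\x^Q)$. Your multiplicative-growth phrasing, the explicit suffix-overshoot justification of $c(\x')>B/2$, and the telescoping orthant-submodularity argument are just more detailed renderings of steps the paper states tersely.
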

		\begin{proof}  See the Appendix, section~\ref{sec:proof}
		\end{proof}
		\begin{lemma}
			$f(\o_2^t) \leq  4f(\x^t)$.
			\label{lem:3}
		\end{lemma}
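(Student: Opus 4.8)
The plan is to bound $f(\o_2^t)$ from above by walking up the chain $\u^t_0=\x^t\sqsubseteq\u^t_1\sqsubseteq\cdots\sqsubseteq\u^t_r=\o_2^t$ and charging each incremental gain against the density test that the algorithm used to reject the corresponding element. Since $\u^t_r=\o_2^t$ and $\u^t_0=\x^t$, telescoping gives the exact identity
\[
f(\o_2^t)=f(\x^t)+\sum_{l=1}^{r}\Delta_{(u_l,i_l)}f(\u^t_{l-1}),
\]
so the whole task reduces to showing that the sum on the right is at most $3f(\x^t)$.

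First I would push each marginal back to the moment the algorithm examined $u_l$. Each $u_l$ lies in $supp(\o_2)$ but not in $supp(\x^t)$, so it was inspected in the main loop at some intermediate state $\x^{j_l}\sqsubseteq\x^t\sqsubseteq\u^t_{l-1}$; orthant submodularity then yields $\Delta_{(u_l,i_l)}f(\u^t_{l-1})\le\Delta_{(u_l,i_l)}f(\x^{j_l})$. Because elements are added only when $\Delta_{(e,i_e)}f(\x)\ge c(e)f(\x)/B\ge 0$, the sequence $f(\x^0),f(\x^1),\dots,f(\x^t)$ is nondecreasing, so $f(\x^{j_l})\le f(\x^t)$. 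The rejection of $u_l$ means the test failed at the coordinate $i_{u_l}$ actually checked by the algorithm, i.e. $\Delta_{(u_l,i_{u_l})}f(\x^{j_l})<c(u_l)f(\x^{j_l})/B\le c(u_l)f(\x^t)/B$.

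The crux is the mismatch between the coordinate $i_{u_l}=\arg\max_{i}f((u_l,i))$ that the algorithm tests and the coordinate $i_l=\o_2(u_l)$ at which $u_l$ sits inside $\o_2^t$; these may differ, and since $f$ is non-monotone the marginal $\Delta_{(u_l,i_l)}f(\x^{j_l})$ can even be negative. Here I would combine two facts: orthant submodularity against $\0$ gives $\Delta_{(u_l,i_l)}f(\x^{j_l})\le f((u_l,i_l))\le f((u_l,i_{u_l}))$ by the singleton-optimality of $i_{u_l}$, while pairwise monotonicity at $\x^{j_l}$, namely $\Delta_{(u_l,i_l)}f(\x^{j_l})+\Delta_{(u_l,i_{u_l})}f(\x^{j_l})\ge 0$, controls the sign of the discrepancy between the two coordinates. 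Reconciling them lets me bound each term by a fixed multiple of $c(u_l)f(\x^t)/B$; summing and using $\sum_{l} c(u_l)\le c(\o_2)\le B$ collapses the telescoped sum to a constant times $f(\x^t)$, which together with the base term $f(\x^t)$ gives $f(\o_2^t)\le 4f(\x^t)$.

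I expect the coordinate reconciliation to be the main obstacle, and it is exactly what inflates the constant. In an aligned, monotone setting the density test alone would bound the sum by $f(\x^t)$ and hence give $f(\o_2^t)\le 2f(\x^t)$; the loss from routing the optimal coordinate $\o_2(u_l)$ through the tested coordinate $i_{u_l}$ with pairwise monotonicity—absorbing the possibly negative marginals—is what degrades the guarantee to the stated factor of $4$.
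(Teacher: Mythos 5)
Your proof skeleton coincides with the paper's (inequalities \eqref{ine:6}--\eqref{ine:8}): telescope from $\x^t$ up to $\o_2^t$, push each marginal back to the state at which the algorithm examined the rejected element via orthant submodularity, charge it against the failed density test, and finish with $\sum_l c(u_l)\le c(\o_2)\le B$. The difference is that the paper has no coordinate-reconciliation step at all, and that step --- which you yourself flag as the crux --- is a genuine gap: it cannot be carried out with the tools you name. Pairwise monotonicity, $\Delta_{(u_l,i_l)}f(\x^{j_l})+\Delta_{(u_l,i_{u_l})}f(\x^{j_l})\ge 0$, bounds the marginal at the optimal coordinate only from \emph{below}, while singleton optimality bounds it from above only by $f((u_l,i_{u_l}))$, a quantity with no relation to the threshold $c(u_l)f(\x^t)/B$. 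A concrete obstruction ($k=2$, coverage objective, unit costs, $B$ large): $a$ covers $\{1,\dots,10\}$ on topic $1$; $u$ covers $\{1,\dots,10\}$ on topic $1$ and a disjoint set $\{11,\dots,19\}$ on topic $2$. Then $i_u=1$ by singleton optimality, $u$ is rejected once $a$ is in the solution (since $\Delta_{(u,1)}f(\{(a,1)\})=0$), yet $\Delta_{(u,2)}f(\{(a,1)\})=9$ while $c(u)f(\x^t)/B=10/B$ is arbitrarily small, so no fixed multiple of the threshold can absorb the mismatched marginal. Worse, replicating $u$ many times with pairwise-disjoint topic-$2$ coverage shows that under the literal reading of Line~\ref{cond1} of Algorithm~\ref{alg:1} (argmax of the singleton value $f((e,i))$) the statement $f(\o_2^t)\le 4f(\x^t)$ is itself false, so no argument along your lines can close the gap.

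The paper escapes this because its analysis implicitly treats the tested coordinate as $\arg\max_{i\in[k]}\Delta_{(e,i)}f(\x)$, the maximum \emph{marginal-gain} coordinate --- this is how ``the selection rule of the algorithm'' is invoked in \eqref{ine:3} and \eqref{ine:7}, and it matches the explicit rule at Line~\ref{alg2_findmax} of Algorithm~\ref{alg:2}. Under that reading, a failed test bounds $\Delta_{(e,i)}f(\x^{<e})$ for \emph{every} $i\in[k]$ simultaneously, so no constant is lost at this step: the telescoped sum is at most $\sum_{e}c(e)f(\x^t)/B\le f(\x^t)$, giving the stronger $f(\o_2^t)\le 2f(\x^t)$. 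The constant $4$ in the statement is slack; in the paper it arises from the enclosing computation $f(\o_2)-f(\x^t)\le 2f(\x^t)+\bigl(f(\o_2^t)-f(\x^t)\bigr)\le 3f(\x^t)$, which uses Lemma~\ref{lem:1} to handle $f(\o_2)-f(\o_2^t)$ --- not from any coordinate penalty as your closing paragraph conjectures. If you adopt the max-marginal reading of Line~\ref{cond1}, your telescoping argument goes through verbatim and in fact proves the sharper bound.
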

		\begin{proof} See the Appendix, section~\ref{sec:proof}
	\end{proof}
	From these above lemmas, we imply the following lemma:
	\begin{lemma}
		$f(\x') \geq  f(\o_2)/18$.
		\label{lem:4}
	\end{lemma}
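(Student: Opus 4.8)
The plan is to obtain Lemma~\ref{lem:4} purely by chaining the three preceding lemmas, with no new structural argument needed: the work has already been done in establishing Lemmas~\ref{lem:1}--\ref{lem:3}, and here I only need to substitute them in the right order to collapse everything onto $f(\x^t)$ and then onto $f(\x')$.

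First I would instantiate Lemma~\ref{lem:1} at the endpoint $j=t$, which gives
\begin{align}
	f(\o_2) \leq f(\o_2^t) + 2 f(\x^t).
\end{align}
The point of using $j=t$ specifically is that it is exactly the index at which Lemma~\ref{lem:3} provides a bound on $f(\o_2^t)$; both lemmas are stated so as to control their left-hand sides by a multiple of $f(\x^t)$, so they combine cleanly.

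Next I would substitute the bound $f(\o_2^t) \leq 4 f(\x^t)$ from Lemma~\ref{lem:3} into the inequality above, eliminating the auxiliary quantity $f(\o_2^t)$ entirely and leaving
\begin{align}
	f(\o_2) \leq 4 f(\x^t) + 2 f(\x^t) = 6 f(\x^t).
\end{align}
Hence $f(\x^t) \geq f(\o_2)/6$. Finally I would invoke Lemma~\ref{lem:2}, $f(\x') \geq f(\x^t)/3$, to pass from the full candidate solution $\x^t$ to the truncated solution $\x'$ that actually respects the budget, and conclude
\begin{align}
	f(\x') \geq \frac{f(\x^t)}{3} \geq \frac{f(\o_2)}{18}.
\end{align}

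There is essentially no obstacle at this stage; the only thing to be careful about is that the three lemmas are all applied at the common index $t$ (the end of the main loop) so that the factors multiply as $2+4=6$ and then $6\cdot 3=18$, matching the claimed constant. All the genuine difficulty — in particular the use of pairwise monotonicity to handle the non-monotone objective and the telescoping argument behind Lemma~\ref{lem:1}, together with the truncation bound in Lemma~\ref{lem:2} — lives in the proofs of the earlier lemmas rather than in this combination step.
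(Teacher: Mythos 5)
Your proposal is correct and is essentially identical to the paper's own proof: the paper likewise applies Lemma~\ref{lem:1} at $j=t$ together with Lemma~\ref{lem:3} to get $f(\o_2)\leq 6f(\x^t)$ (written there as $f(\o_2)-f(\x^t)\leq 5f(\x^t)$), and then invokes Lemma~\ref{lem:2} to conclude $f(\x')\geq f(\x^t)/3\geq f(\o_2)/18$. No gaps; the combination step is exactly as mechanical as you describe.
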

	\begin{proof}
		See the Appendix, section~\ref{sec:proof}
	\end{proof}
	\begin{theorem} Algorithm~\ref{alg:1} is a single-pass streaming algorithm that returns an approximation ratio of $1/19$ and takes  $nk$ queries.
		\label{theo:alg1}
	\end{theorem}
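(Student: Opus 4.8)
The plan is to play the two candidate solutions the algorithm maintains — the best single tuple $(e_m,i_m)$ and the set $\x'$ extracted from the running solution $\x$ — against a split of the optimum $\o$ across the partition $V=V_1\cup V_2$, and then to import Lemma~\ref{lem:4} to dispose of the $V_2$ part. First I would record the structural fact driving the $V_1$ bound: every element of $V_1$ costs more than $B/2$, so any feasible $k$-set supported on $V_1$ contains at most one element (two would already exceed $B$). Hence the optimum $\o_1$ over $V_1$ is a single tuple, and since $(e_m,i_m)$ is refreshed at every iteration to the globally best single tuple $\arg\max_{e\in V,\,i\in[k]}f((e,i))$, we obtain $f(\o_1)\le f((e_m,i_m))$.

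Next I would bound $\opt$ by $f(\o_1)+f(\o_2)$. Writing $\o=\o'_1\sqcup\o'_2$ with $\o'_1,\o'_2$ the restrictions of $\o$ to $V_1,V_2$, these two $k$-sets have disjoint supports, so $\o'_1\sqcap\o'_2=\0$ and $\o'_1\sqcup\o'_2=\o$ exactly. Applying the $k$-submodular inequality to $\o'_1$ and $\o'_2$ together with $f(\0)=0$ collapses to $f(\o)\le f(\o'_1)+f(\o'_2)$. Since $\o'_1$ is feasible over $V_1$ and $\o'_2$ is feasible over $V_2$ (both inherit $c(\cdot)\le B$ from $\o$), optimality gives $f(\o'_1)\le f(\o_1)$ and $f(\o'_2)\le f(\o_2)$, whence $\opt\le f(\o_1)+f(\o_2)$.

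I then combine the pieces. By Lemma~\ref{lem:4} we have $f(\o_2)\le 18\,f(\x')$, and by the first step $f(\o_1)\le f((e_m,i_m))$, so $\opt\le f((e_m,i_m))+18\,f(\x')$. Because $\s$ is chosen as the better of $(e_m,i_m)$ and $\x'$, we have $f(\s)\ge f((e_m,i_m))$ and $f(\s)\ge f(\x')$, giving $\opt\le f(\s)+18\,f(\s)=19\,f(\s)$, which is the claimed $1/19$ ratio. For the query complexity, each of the $n$ iterations of the main loop spends $k$ queries at line~\ref{cond1} to evaluate $f((e,i))$ over all positions $i\in[k]$, plus only $O(1)$ further queries for the marginal-gain test at line~\ref{cond2} and the running update; the post-processing that selects $\x'$ and $\s$ is dominated by cost comparisons and a constant number of evaluations. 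Summing over the loop yields the $O(nk)$ bound.

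The real weight of the argument sits in the deferred Lemma~\ref{lem:4} (itself resting on Lemmas~\ref{lem:1}--\ref{lem:3}), not in the theorem proper. Within the present argument the only delicate point is the decomposition $\opt\le f(\o_1)+f(\o_2)$: I must check that the disjoint supports of $\o'_1$ and $\o'_2$ make $\o'_1\sqcap\o'_2=\0$ exactly, so that the $k$-submodular inequality degenerates into subadditivity, and confirm that restricting $\o$ to $V_1$ or to $V_2$ preserves knapsack feasibility so that the optimality bounds $f(\o'_1)\le f(\o_1)$ and $f(\o'_2)\le f(\o_2)$ are legitimate; the single-element observation for $V_1$ is what lets $(e_m,i_m)$ stand in for $f(\o_1)$.
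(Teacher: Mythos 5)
Your proposal is correct and follows essentially the same route as the paper's own proof: the $V_1$/$V_2$ split with the single-element observation for $V_1$, the subadditivity bound $f(\o)\le f(\o'_1)+f(\o'_2)\le f(\o_1)+f(\o_2)$ from $k$-submodularity, and Lemma~\ref{lem:4} to control the $V_2$ part, yielding $\opt\le 19f(\s)$. Your write-up is in fact slightly more careful than the paper's (spelling out why $\o'_1\sqcap\o'_2=\0$ and $\o'_1\sqcup\o'_2=\o$, and why the restrictions remain feasible), and your $O(nk)$ query accounting, which also counts the marginal-gain evaluations, is a sound reading of the paper's looser claim of $nk$ queries.
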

	\begin{proof} See the Appendix, section~\ref{sec:proof}
	\end{proof}
	\subsection{A Robust Linear Approximation Algorithm}
	We next introduce the $\RLA$ algorithm, which improves the approximation ratio to $1/5-\epsilon$ and takes $O(kn/\epsilon)$ query complexity. $\RLA$ keeps the key idea of $\LAA$ by reusing the $\LAA$'s solution to bounding the $\opt$'s range and adapts a greedy threshold to improve the approximation ratio by conducting $O(1/\epsilon)$ times scanning over the ground set. 	
	The details of the algorithm are fully presented in Algorithm~\ref{alg:2}.
	
	\begin{algorithm}[h]
		\SetNlSty{text}{}{:}
		%	\begin{algorithmic}[1]
			\KwIn{$V$, $f$, $k$, $B>0$, $\epsilon>0$.}
			\KwOut{A solution $\s$}
			$\s_b \leftarrow $ result of Algorithm~\ref{alg:1};
			$\Gamma \leftarrow f(\s_b)$
			%	$\theta \leftarrow 5 \Gamma/B$
			\\
			$A\leftarrow \{(1+\epsilon)^i: i \in \mathbb{N}, \Gamma \leq (1+\epsilon)^i\leq 19\Gamma \}$
			\\
			\For{$e \in V$}
			{
				\ForEach{$v \in A$}
				{
					$i_v \leftarrow  \arg\max_{i \in [k]}\Delta_{(e,i)}f(\s_v)$ \label{alg2_findmax}
					\\
					$\tau_v =2v/(5B)$				
					\\
					\If{$c(\s_v)+c(e) \leq B \  \mbox{\textbf{and}} \   \Delta_{(e, i_v)}f(\s_v)/c(e)) \geq \tau_v$} 
					{ 
						\label{alg2_condition}
						$ \s_v \leftarrow \s_v \sqcup  (e, i_v)$\\		
					}
				}			
			}
			$\s_{final} \leftarrow \arg\max_{\s' \in \{ \s_{max}, \s_1, \s_2, \ldots, \s_{|S|} \} } f(\s')$
			\Return $\s_{final}$
			\caption{Robust Linear Approximation ($\RLA$) Algorithm}
			\label{alg:2}
		\end{algorithm}
		
		Specifically,	$\RLA$ takes an instance $(V, f, k, B)$ of $\kSMK$ and an accuracy parameter $\epsilon>0$ as inputs.  $\RLA$ first calls  $\LAA$ as a subroutine and uses  $\LAA$'s solution, $\s_b$,  to obtain a bound range of the optimal solution (line 1). From Theorem \ref{theo:alg1}, we have $\Gamma \leq \opt \leq 19 \Gamma$.
		
		The major part of the algorithm consists of two loops: the outer to scan each element $e$ in the ground set $V$ and the inner to consider each candidate solution $\s_v$ for each $v$ filtered out from the set $A$. On the basis of Theorem \ref{theo:alg1}, we construct the set $A$ to bound the number of candidate solutions $\s_v$. We define $(e,i_v)$ as the tuple that gives the largest marginal gain when added into $\s_v$. When an element $e$ arrives, the algorithm handles these works: (1) choose the position $i_v$ with maximal marginal gain with respect to $\s_v$ and $e$ (line~\eqref{alg2_findmax}); (2) use threshold $\tau_v=2v/(5B)$ to add the element  $e$ into  $\s_v$ if it has the high \textit{density gain} that is defined as the ratio of the marginal gain of that element over its cost without violating the budget constraint (line~\eqref{alg2_condition}).
		
	We still keep the notations $\o$ as an optimal solution of the problem over $V$ and the optimal value $\opt =f(\o)$. We add some notations regarding to Algorithm~\ref{alg:2} as follows:
		\begin{itemize}
			\item[$\bullet$] $\s_v=\{(e_1, i_1), (e_2, i_2), \ldots, (e_q, i_q)\}$ is the candidate solution  with respect to some elements $v\in A$ after ending the outer loop.
			\item[$\bullet$] $\s^j_v=\{(e_1, i_1), (e_2, i_2), \ldots, (e_j, i_j)\}, 1 \leq j\leq q$ and $\s^0_v=\0$. 
			\item[$\bullet$]  $\s_v^{<e}$ as $\s_v$ right before $e$ is processed.
			\item[$\bullet$] 	$\u=\{(u_1, i_1), (u_2, i_2), \ldots, (u_r,i_r) \}$ as a set of elements belongs to $\o$ yet doesn't belong to $\s_v$, $r=|supp(\u)|$. 
			\item[$\bullet$] $\u_l=\s_{v} \sqcup \{(u_1, i_1), (u_2, i_2), \ldots, (u_l,i_l) \}, \forall 1 \leq l\leq r$ and $\u_0=\s_v$.
			\item[$\bullet$] $\o^j=(\o \sqcup \s^j ) \sqcup \s^j$.
			\item[$\bullet$] $\o^{j-1/2}=(\o \sqcup \s^j ) \sqcup \s^{j-1}$. 
			\item[$\bullet$] $\s^{j-1/2}$: If $e_j \in supp(\o)$, then $\s^{j-1/2}=\s^{j-1} \sqcup (e_j, \o(e_j)) $. If $e_j \notin supp(\o)$, $\s^{j-1/2}=\s^{j-1}$.
		\end{itemize}
		\begin{lemma} For any $v\in A$,	if there is no element $o \in supp(\o)\setminus supp(\s_v)$ so that $\Delta_{(o,\o(o))}f(\s^{<o}_v)\geq \tau_v$ and $c(\s_v^{<o})+c(o)>B$, we have: $f(\o)\leq 3f(\s_v)+c(\o)\tau_v$. 
			\label{lem:alg2.1}
		\end{lemma}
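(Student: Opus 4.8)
The plan is to prove the inequality by splitting it into two complementary bounds,
$$f(\o) \le f(\o^q) + 2f(\s_v) \qquad\text{and}\qquad f(\o^q) \le f(\s_v) + c(\o)\tau_v,$$
and then adding them. Here the endpoint of the interpolation, $\o^q = \u_r$, is the ``overwrite'' of $\o$ by the final solution $\s_v$: it agrees with $\s_v$ on $supp(\s_v)$ and with $\o$ on $supp(\o)\setminus supp(\s_v)$. Note that the first bound holds unconditionally (it never uses the hypothesis), while the hypothesis of the lemma is exactly what drives the second bound.

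For the first bound I would telescope $f(\o)-f(\o^q)=\sum_{j=1}^q\bigl(f(\o^{j-1})-f(\o^j)\bigr)$ and insert the intermediate point $\o^{j-1/2}$, writing each summand as $\bigl(f(\o^{j-1})-f(\o^{j-1/2})\bigr)+\bigl(f(\o^{j-1/2})-f(\o^j)\bigr)$. Since $\o^{j-1},\o^{j-1/2},\o^j$ can differ only in the single coordinate $e_j$, I would split into cases according to whether $e_j\notin supp(\o)$, $\o(e_j)=i_j$, or $\o(e_j)\neq i_j$. The last (conflicting) case is where non-monotonicity bites: there the summand equals $\Delta_{(e_j,\o(e_j))}f(\o^{j-1/2})-\Delta_{(e_j,i_j)}f(\o^{j-1/2})$, and I would use \emph{pairwise monotonicity} at $\o^{j-1/2}$ to replace the possibly negative term $-\Delta_{(e_j,i_j)}f(\o^{j-1/2})$ by $\Delta_{(e_j,\o(e_j))}f(\o^{j-1/2})$, then \emph{orthant submodularity} (via $\s_v^{j-1}\sqsubseteq\o^{j-1/2}$) together with the fact that $i_j$ maximizes the marginal at $\s_v^{j-1}$ to bound the whole summand by $2\Delta_{(e_j,i_j)}f(\s_v^{j-1})=2\bigl(f(\s_v^{j})-f(\s_v^{j-1})\bigr)$. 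The two non-conflicting cases are treated identically and yield the smaller increment $f(\s_v^{j})-f(\s_v^{j-1})\ge 0$. Summing the telescoping series then gives $f(\o)-f(\o^q)\le 2f(\s_v)$.

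For the second bound I would telescope along $\u$: $f(\o^q)=f(\u_r)=f(\s_v)+\sum_{l=1}^{r}\Delta_{(u_l,\o(u_l))}f(\u_{l-1})$. Because $\s_v^{<u_l}\sqsubseteq\s_v\sqsubseteq\u_{l-1}$, orthant submodularity gives $\Delta_{(u_l,\o(u_l))}f(\u_{l-1})\le \Delta_{(u_l,\o(u_l))}f(\s_v^{<u_l})$. The key point is then to bound each such marginal by $\tau_v c(u_l)$: each $u_l\in supp(\o)\setminus supp(\s_v)$ was examined but not inserted, so the insertion test failed, meaning $u_l$ was either budget-blocked or below the density threshold; the hypothesis of the lemma removes the ``budget-blocked and above threshold'' possibility, so a short case split forces the density of $u_l$ at its best position (hence also at $\o(u_l)$) to lie below $\tau_v$, i.e. $\Delta_{(u_l,\o(u_l))}f(\s_v^{<u_l})< \tau_v c(u_l)$. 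Summing over $l$ and using $\sum_l c(u_l)\le c(\o)$ yields $f(\o^q)\le f(\s_v)+\tau_v c(\o)$. Adding the two bounds gives $f(\o)\le 3f(\s_v)+c(\o)\tau_v$.

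The main obstacle is the first bound, specifically the conflicting case $\o(e_j)\neq i_j$: one must carefully track which single coordinate changes between $\o^{j-1}$, $\o^{j-1/2}$, and $\o^j$, and then combine pairwise monotonicity with orthant submodularity so that the (possibly negative) marginals are absorbed rather than lost. This is exactly the step that controls the non-monotone structure of $f$, and it is what produces the factor $2$ there and ultimately the constant $3$ in the statement; the rest of the argument is routine telescoping and an application of the insertion rule.
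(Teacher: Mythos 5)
Your proposal is correct and takes essentially the same route as the paper: the identical decomposition $f(\o)-f(\s_v)=\bigl(f(\o)-f(\o^q)\bigr)+\bigl(f(\o^q)-f(\s_v)\bigr)$, with the first term bounded by $2f(\s_v)$ via pairwise monotonicity, orthant submodularity, and the selection rule (the paper simply imports this from its Lemma~1 by analogy, whereas you re-derive it for Algorithm~2), and the second term bounded by $c(\o)\tau_v$ by telescoping along $\u$, applying orthant submodularity to reduce to marginals at $\s_v^{<u_l}$, and invoking the failed insertion test together with the lemma's hypothesis. There is no gap; your version is just a more self-contained write-up of the paper's argument.
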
	
		\begin{proof} See the Appendix, section~\ref{sec:proof}
		\end{proof}
		\begin{theorem}For  $ 0<\epsilon < 1/5$, the Algorithm~\ref{alg:2}  returns an approximation ratio of $1/5-\epsilon$, within   $O(nk/\epsilon)$ queries.
		\label{theo:alg2}
		\end{theorem}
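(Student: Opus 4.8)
The plan is to establish the two claims of the theorem independently: the $O(nk/\epsilon)$ query bound and the $1/5-\epsilon$ ratio. For the query count I would first bound $|A|$. Since $A$ collects the powers of $(1+\epsilon)$ lying in $[\Gamma,19\Gamma]$, its size is at most $\log_{1+\epsilon}(19)=O(\log(19)/\log(1+\epsilon))=O(1/\epsilon)$. The initial call to Algorithm~\ref{alg:1} costs $O(nk)$ by Theorem~\ref{theo:alg1}. In the double loop each of the $n$ elements is tested against each $v\in A$, and for a fixed $v$ the computation of $i_v=\arg\max_{i\in[k]}\Delta_{(e,i)}f(\s_v)$ needs $O(k)$ oracle calls. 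Hence the loop costs $O(nk|A|)=O(nk/\epsilon)$, which dominates and yields the stated bound.

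For the ratio I would start from the range $\Gamma\le\opt\le 19\Gamma$ furnished by Theorem~\ref{theo:alg1}, so that $\opt$ lies inside the interval that $A$ discretizes. The geometric spacing then guarantees a guess $\hat v\in A$ with $\opt/(1+\epsilon)\le\hat v\le\opt$; I fix this $\hat v$ and show its candidate $\s_{\hat v}$, together with the best singleton the algorithm keeps, already meets the target. The argument splits according to whether the budget ever blocked a high-density optimal element, i.e.\ whether there is some $o\in supp(\o)\setminus supp(\s_{\hat v})$ whose density gain is at least $\tau_{\hat v}$ but that was rejected only because $c(\s_{\hat v}^{<o})+c(o)>B$.

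In the first branch no such $o$ exists, so Lemma~\ref{lem:alg2.1} applies and gives $f(\o)\le 3f(\s_{\hat v})+c(\o)\tau_{\hat v}$. Substituting $c(\o)\le B$ and $\tau_{\hat v}=2\hat v/(5B)$ yields $c(\o)\tau_{\hat v}\le 2\hat v/5\le 2\opt/5$, whence $3f(\s_{\hat v})\ge 3\opt/5$ and $f(\s_{\hat v})\ge\opt/5$. In the second branch I would use the threshold rule directly: each element admitted into $\s_{\hat v}$ contributed marginal gain at least $\tau_{\hat v}$ times its cost, and these contributions are positive, so telescoping gives $f(\s_{\hat v})\ge f(\s_{\hat v}^{<o})\ge\tau_{\hat v}\,c(\s_{\hat v}^{<o})$. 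For the blocked element $o$, orthant submodularity gives $f((o,\o(o)))=\Delta_{(o,\o(o))}f(\0)\ge\Delta_{(o,\o(o))}f(\s_{\hat v}^{<o})\ge\tau_{\hat v}\,c(o)$, so the tracked best singleton has value at least $\tau_{\hat v}c(o)$. Since $\s_{final}$ is the $\arg\max$ over all candidates, $f(\s_{final})\ge\frac12\big(f(\s_{\hat v})+\tau_{\hat v}c(o)\big)\ge\frac{\tau_{\hat v}}{2}\big(c(\s_{\hat v}^{<o})+c(o)\big)>\frac{\tau_{\hat v}}{2}B=\hat v/5$. In both branches $f(\s_{final})\ge\hat v/5\ge\opt/(5(1+\epsilon))$, and the elementary inequality $1/(5(1+\epsilon))\ge 1/5-\epsilon$ completes the estimate.

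The main obstacle is the non-monotone second branch: because the blocked optimal element $o$ is never actually inserted, its value cannot be folded into $f(\s_{\hat v})$, so the proof must route $o$'s contribution through the separately maintained singleton and invoke orthant submodularity to transfer the threshold test from $\s_{\hat v}^{<o}$ back to $\0$. Combining that singleton bound with the accumulated density of $\s_{\hat v}$ and the budget-violation inequality $c(\s_{\hat v}^{<o})+c(o)>B$ is the delicate step; by contrast the grid existence of $\hat v$ and the final conversion from $1/(5(1+\epsilon))$ to $1/5-\epsilon$ are routine.
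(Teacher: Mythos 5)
Your proposal is correct and follows essentially the same route as the paper's own proof: the identical query-count accounting via $|A|=O(1/\epsilon)$, the same grid guess $v$ with $\opt/(1+\epsilon)\le v\le\opt$, the same case split on whether a high-density optimal element was blocked by the budget, with Lemma~\ref{lem:alg2.1} handling one case and the averaging argument combining the best singleton, the accumulated density bound $f(\s_v)\ge\tau_v c(\s_v^{<o})$, orthant submodularity, and $c(\s_v^{<o})+c(o)>B$ handling the other. Your bookkeeping in the blocked-element case (bounding the singleton directly by $\Delta_{(o,\o(o))}f(\s_v^{<o})$ rather than passing through $f(\s_v^{<o}\sqcup(o,\o(o)))$) is only a cosmetic reordering of the paper's inequality chain.
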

		\begin{proof}
		 See the Appendix, section~\ref{sec:proof}
		\end{proof}
		\section{Experiments}
		\label{sec:ex}
		In this section, we compare the performance between our algorithms and state-of-the-art algorithms for the $\kSMK$ problem listed below:
		\begin{itemize}
			%\item \textbf{Greedy}: the $(1/2-1/(2e))$-approximation algorithm within $O(n^4k^3)$ in \cite{TANG202228}.
			\item \textbf{Deterministic Streaming (DS)}\footnote{The $\kSMK$ problem is a special case of the $k$-submodular maximization under the budget constraint in \cite{Canh_joco21} with $\beta =1$.}: A streaming algorithm in \cite{Canh_joco21} which returns an approximation ratio of $1/5-\epsilon$, requires 1-pass and $O(kn\log(n)/\epsilon)$ queries.
			\item \textbf{Random Streaming (RS)}: Another streaming algorithm in \cite{Canh_joco21} which returns an approximation ratio of $k/(5k-2)-\epsilon$ in expectation, requires one pass and $O(kn\log(n)/\epsilon)$ queries.
		\end{itemize}
		Although Greedy proposed by \cite{TANG202228} gives the best approximation ratio yet it is only available for the monotone case. Besides, in \cite{DSAA22}, authors also showed that the running time of Greedy was so long that they had to limit the time to cut off the experiment. Therefore, in the experiment, we eliminated the Greedy. Also, we conduct experiments on specific applications, which are \textbf{$k$-topic Influence Maximization under knapsack constraint ($\kIMK$)} and \textbf{$k$-type Sensor Placement under Knapsack constraint ($\kSPK$)} on three important measurements: the oracle value of the objective function, the number of queries, and running time. We further show the trade-off between the solution quality and the number of queries of algorithms with various settings of budget $B$. 
		
		We also use the dataset as mentioned in~\cite{ksub-stream-icml20} to illustrate the performance of compared algorithms (Table~\ref{tab_db}). To demonstrate the performance of algorithms via the above three measurements, we show some figures numbered and captioned, in which the terms Fig, K, and M stand for Figure, thousands, and millions, respectively.  
		
		All the implementations are on a Linux machine with configurations of $2\times$ Intel Xeon Silver $4216$ Processor @$2.10$GHz and $16$ threads x$256$GB DIMM ECC DDR4 @$2666$MHz. 
		\begin{table}[hpt]
			\caption{The dataset}
			\label{tab_db}   
			\centering
			\begin{tabular}{lccccc}
				\hline
				\textbf{Database} & \textbf{\#Nodes} & \textbf{\#Edges} &\textbf{ Types} &\textbf{ Instances}
				\\ 
				\hline
				Facebook \cite{fb_network} & 4039 & 88234 & directed& $\kIMK$ \\
				Intel Lab sensors\cite{sensor} & 56 & - & -&\kSPK \\
				\hline
			\end{tabular}
		\end{table}
		\subsection{$k$-topic Influence Maximization under Knapsack constraint ($\kIMK$)}
		\quad The information diffusion model, called Linear Threshold (LT) model \cite{kem03,ksub-stream-icml20} was briefed, and the $k$-topic Influence Maximization under Knapsack constraint ($\kIMK$) problem using this model was defined as follows:
		\paragraph{LT model} A social network is modeled by a directed graph $G=(V, E)$, where $V, E$ represent sets of users and links, respectively. Each edge $(u, v)\in E$ is assigned weights $\{ w^i(u, v) \}_{ i\in [k]}$, where each $w^i(u, v)$ represents how powerful $u$ influences to $v$ on the $i$-th topic.  Each node $u \in V$ has a \textit{influence threshold} with topic $i$, denoted by $\theta^i(u)$,  which is chosen uniformly at random in $[0,1]$. Given a seed set $\s=(S_1, S_2, \ldots, S_k) \in (k+1)^V$, the information propagation  for topic $i$ happens in discrete steps $t=0, 1,\ldots$ as follows. At step $t=0$, all nodes in $S_i$ become active by topic $i$. At step $t\geq 1$, a node $u$ becomes active if $\sum_{\mbox{actived} v} w^i(v,u) \geq \theta^i(u)$. 
		
		The information diffusion process on topic $i$ ends at step $t$ if there is no new active node and the diffusion process of a topic is independent of the others. Denote by $\sigma(\s)$ the number of nodes that become active in at least one of $k$ topics after the diffusion process of a seed $k$-set $\s$, i.e.,
		\begin{align}
			\sigma(\s)=\eE[|\cup_{i\in [k]} \sigma_i(S_i)|]
			\label{IM}
		\end{align}
		where $\sigma_i(S_i)$ is a random variable representing the set of active users for topic $i$ with the seed $S_i$. 
		\paragraph{The $\kIMK$ problem} The problem is formally defined as follows:
		\begin{definition}[$\kIMK$ problem]  Assuming that each user $e$ has a cost $c(e)>0$ for every $i$-th topic, which illustrates how difficult it is to initially influence the appropriate individual about that topic.  Given a budget $B>0$, the problem asks to find a seed set $\s$ with $c(\s)=\sum_{e \in S_i, i\in k} c(e)\leq B$ so that $\sigma(\s)$ maximal.
		\end{definition}  
		\subsection{$k$-type Sensor Placement under Knapsack constraint}
		\quad We further study the performance of algorithms for \textit{$k$-type Sensor Placement under Knapsack constraint} ($\kSPK$) problem which is formally defined as follows:
		\begin{definition}[$\kSPK$ problem]  Given $k$ kinds of sensors for different measures and a set $V$ of $n$ locations, each of which is assigned with only one sensor. Assuming that each sensor $e$ has a cost $c(e)>0$ for every $i$-th type.  Given a budget $B>0$, the problem aims to locate these sensors to maximize the information gained with the total cost at most $B$.			
		\end{definition}
		Denote by $R_e^i$ a random variable representing the observation collected from a $i$-type sensor and the information gained  of  a $k$-set $\s$ is 
		\begin{align}
			f(\s)=H(\cup_{e \in supp(\s)} \{R_e^i\})
		\end{align}
		where $H$ is an \textit{entropy function}. 
		\subsection{Results and discussion}
		%%%%%%%%%%%%%%%%%%%%%%%% kIMK%%%%%%%%%%%%%%%%%%%%%%%%%
		\subsubsection{Experiment settings}
		\textbf{For $\kIMK$.} We use the dataset Facebook and set up the model as the recent work~\cite{ksub-stream-icml20}. 
		
		Since the computation of $\sigma(\cdot)$ is \#P-hard~\cite{chen10LT}, we adapt the sampling method in \cite{ksub-stream-icml20,borg} to give an estimation $\hat{\sigma}(\cdot)$ with a $(\lambda,\delta)$-approximation that is:
		\begin{align}
			\Pr[(1+\lambda)\sigma(\s) \geq \hat{\sigma}(\s) \geq (1-\lambda)\sigma(\s)]\geq 1-\delta
			\label{sigmahat}
		\end{align}
		It's said that $\hat{\sigma}(\cdot)$ is $\epsilon$-estimation of $\sigma(\cdot)$ with probability at least $\lambda$. As~\cite{ksub-stream-icml20, Canh_joco21}, in the experiment, we set parameters $\lambda=0.8, \delta=0.2$, $k=3$ and $\epsilon =0.1$ to show a trade-off between solution quality and quantities of queries.
		
		We set $B$ in $\{0.5K, 1K, 1.5K, 2K\}$ to illustrate the expense to influence $k$ topics via social networks is not a small number and set the cost of each element from 1 to 10 according to the Normalized Linear model \cite{Canh_joco21}.    
		
		\textbf{For $\kSPK$.} We use  the dataset Intel Lab~\cite{sensor} to illustrate the $\kSPK$ problem. %Intel Lab sensor data~\cite{sensor} was collected from 2.3 million readings of a Mica2dot weather board sensor system including temperature, humidity, light, and voltage. 
		The data were preprocessed to remove missing fields. Moreover, we set $k=3$, $\epsilon =0.1$ as in the experiment of $\kIMK$, and the cost range from 1 to 10  for the Intel Lab dataset whereas the values of $B$ are fixed at several points from 10 to 50. This setting was due to the number of sensors and the similarity among algorithms. 
		
		\subsubsection{Experiment results} 
		
		To provide a comprehensive experiment, we ran the above algorithms several times and collected results about objective values, the number of queries, and the running time according to the $B$ milestones. For each milestone, the average values were calculated. Figures \ref{fig:f}, and \ref{fig:SS} illustrate the results. 
				
		\textbf{Regarding $\kIMK$}. First, Figure \ref{fig:f}(a) represents the performance of algorithms via values of the objective function $\sigma(\cdot)$. $\RLA$ is equivalent to DS, followed RS, while $\LAA$'s line hits the lowest points. In Figure (a) the gaps between groups $\RLA$-DS, RS, and $\LAA$ seem bigger when $B\ge 1.5K$. 
		
		\begin{figure}[h]
			\centering
			{\includegraphics[width=1\linewidth]{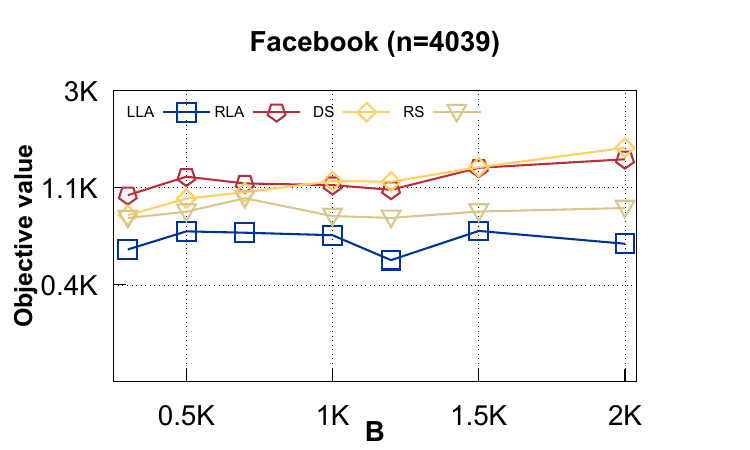}}
			\\
			(a)\\
			{\includegraphics[width=1\linewidth]{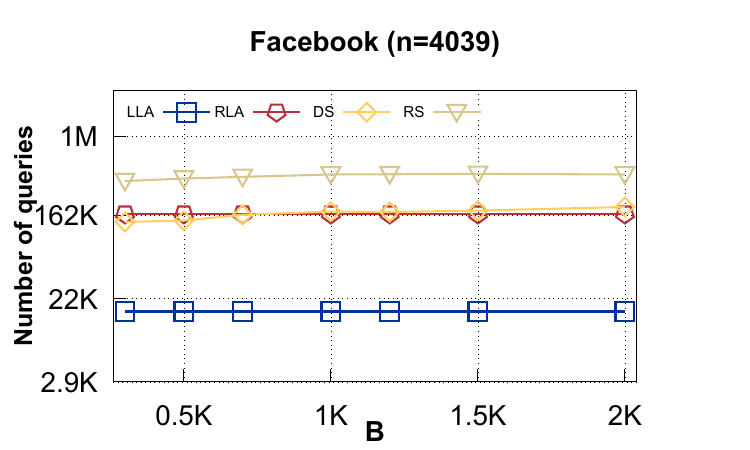}}\\
			(b)
			\\{\includegraphics[width=1\linewidth]{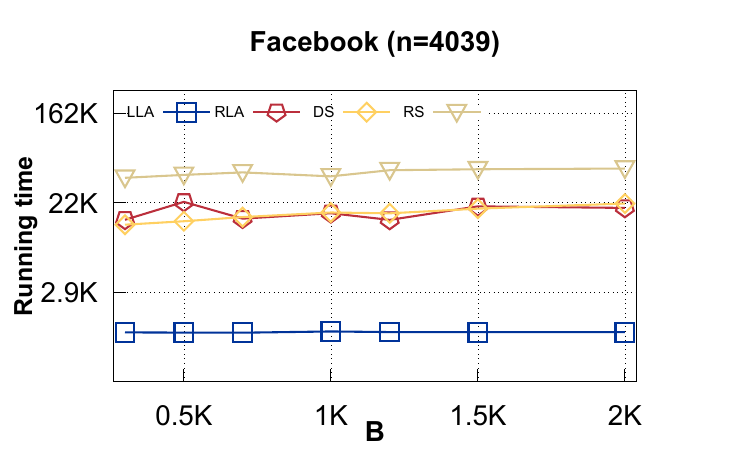}}
			\\ 
			(c)
			\caption{Algorithm results for $\kIMK$ on Facebook: (a) The objective values, (b) The number of queries, (c)~Time consumption}
			\label{fig:f}
		\end{figure}
				
		Second, Figure \ref{fig:f}(b)(c) displays the amounts of queries called and the time needed to run these algorithms. $\LAA$ shows an advantage over others in terms of query complexity. It is sharply from several to dozens of times lower than the remaining. Besides, the number of queries of $\RLA$ is equivalent to DS and lower than RS, respectively. Significantly, these lines explicitly determine and linear over $B$ milestones. Overall, the number of queries of RS is the highest, followed by the group of $\RLA$-DS and $\LAA$, respectively. The experiment indicates the quantities of queries of our algorithms outperform the others.   
		
		As the query complexity directly influences the running time, the representation of the time graph in Figure~\ref{fig:f}(c) looks quite similar to the representation of the query graph in Figure~\ref{fig:f}(b) in which $\LAA$ line was drawn typically lowest. It shows the running time of $\LAA$ is several to dozens of times faster than the others. $\RLA$ runs considerably faster than RS and equivalently to DS.  
		
		The above figures show the trade-off between our proposed algorithms' solution qualities and the query complexities. $\LAA$ tries to target the near-optimal value by dividing the ground into two subsets according to the cost values of elements and reduces query complexity by the filtering condition of the algorithm \ref{alg:1}. Hence, the query complexity is significantly low. Nevertheless, the performance of $\LAA$ regarding solution quality is not high. $\RLA$ enhance $\LAA$ by using $\LAA$ as an input and the decreasing constant threshold. As a result, the objective of $\RLA$ is better than $\LAA$ while the number of queries is higher but still deterministic. $\RLA$ use the threshold $\tau_v$ to upgrade the performance. It leads to the objective value increasing, yet the number of queries also increases. Moreover, when the ground set and $B$ value grow, the solution quality improves while running time and query complexity are linear. This is extremely important when working with big data.
		
		%%%%%%%%%%%%%%%%% kSPK%%%%%%%%%%%%%%%%%%%%%%%%%%%%%%%%%%%%%%
		
		\textbf{Regarding $\kSPK$.}
		\begin{figure}[h]
			\centering
			{\includegraphics[width=1\linewidth]{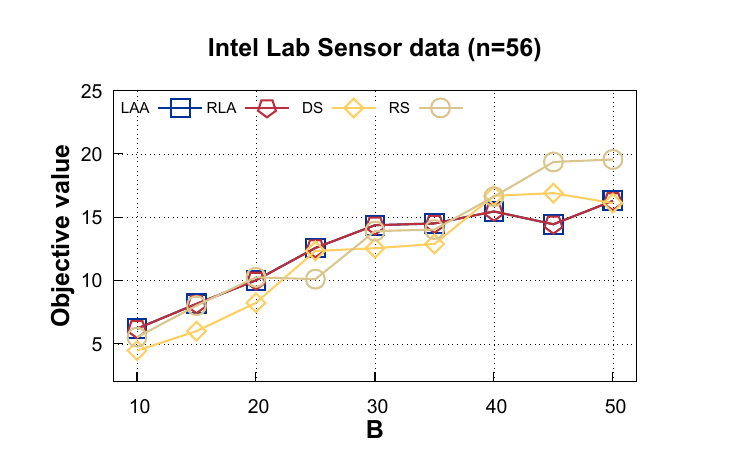}}
			\\(a)\\
			{\includegraphics[width=1\linewidth]{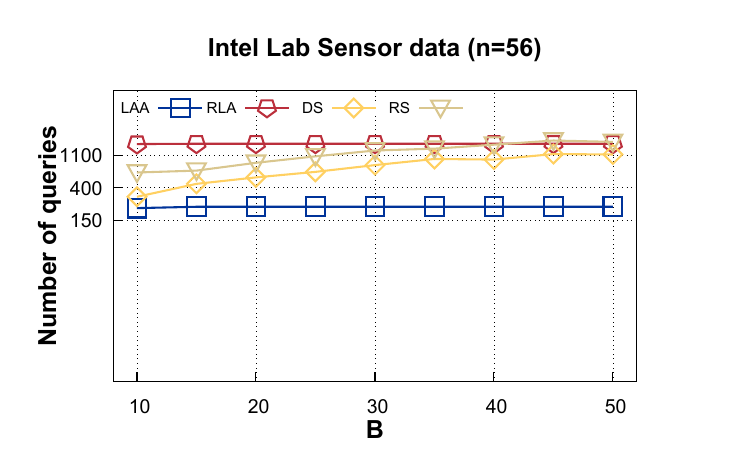}}
			\\(b)\\
			{\includegraphics[width=1\linewidth]{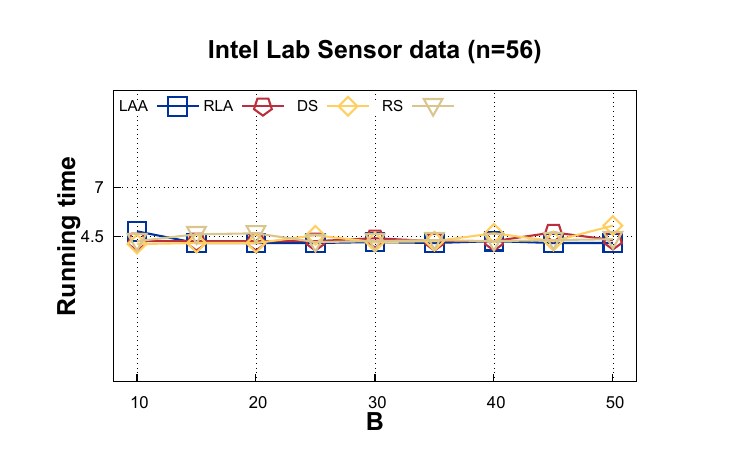}}
			\\(c)\\
			\caption{Algorithm results for $\kSPK$ on Intel Lab: (a) The information gained, (b) the number of queries, (c)~the running time }
			\label{fig:SS}
		\end{figure}
		As can be seen in \ref{fig:SS}(a) the discrimination between objective values of experimented algorithms is not large. $\LAA$ and $\RLA$ seem to overlap while DS and RS fluctuate a bit. When $B$ increases, the gap between these lines becomes larger in which RS states the highest, followed by DS and the group of $\LAA$-$\RLA$, respectively. However, due to the number of nodes being small, on the whole, information gained from these algorithms is almost no different. 
		
		Second, the gap between the number of queries of $\LAA$ and the others in Figure \ref{fig:SS}(b) is significantly large. With the large number of $B$, lines of $\RLA$, DS, and RS tend convergent in which $\RLA$ lies on the remaining, followed by RS and DS, respectively. Regarding Figure~\ref{fig:SS}(c), these lines seem to overlap. Moreover, query lines and timelines of the above algorithms are almost horizontal over $B$'s milestones. This result illustrates the query complexity of our algorithms is linear and equivalent to other ones. 
		
		From two actual uses of $\kIMK$, and $\kSPK$, our solutions are better or equivalent to existing ones while the number of queries reduces, especially when $n$ and $B$ grow. The steady of the proposed linear deterministic ones becomes vital when the data increases. The experiment showed was consistent with the theory. It also indicated the trade-off between our proposed algorithms' solution qualities and the query complexities.  Overall, our proposed algorithms are described to outperform or be comparable to the state-of-the-art.
		\section{Conclusion}
		\label{sec:con}
	This paper works with the problem of maximizing a $k$-submodular function under a knapsack constraint for the non-monotone case. We propose two deterministic algorithms that take just $O(kn)$ query complexity. The core of our algorithms is to keep which elements are over a given appropriate threshold and then choose among them the last elements so that the total cost does not exceed a given budget, $B>0$. 
	
	To investigate the performance of our algorithms in practice, we conducted some experiments on two applications of Influence Maximization and Sensor Placement. Experimental results have shown that our algorithms not only return acceptably reasonable solutions regarding quality requirements but also take a smaller number of queries than state-of-the-art algorithms. However, there are still some open questions, such as how to improve the approximate ratio or the linear query complexity for the $\NkSMK$ problem, that will motivate us in the future.   
		\bibliographystyle{IEEEtran}
		\bibliography{KSE}
		\newpage
		\onecolumn
		\section*{Appendix}	
		\subsection{The proofs of Lemmas and Theorems}\label{sec:proof}
	\begin{proof}[Proof of Lemma~\ref{lem:1}]		
		Due to $f$ might be non-monotone, recap that $(e_j,i_j)$ is the $j$-tuple added into the candidate set $\x$ after the loop of Algorithm \ref{alg:1}. We have 2 sub-cases:
		\begin{itemize}
			\item If $e_j\notin supp(\o_2)$, define an integer number $l\in[k]$ that $l\ne i_j$ and $\o^j_l$ a $k$-set such that $\o^j_l(e)=\o^j_2(e),\forall e\in V_2\setminus\{e_j\}$ and $\o^j_l(e_j)=l$, we have:
			\begin{align}
				f(\o_2^{j-1})-f(\o_2^j)&= f(\o^j_l) -f(\o_2^{j-1}) \nonumber \\
				&- (f(\o_2^j)+ f(\o^j_l)- 2f(\o_2^{j-1}) )
				\\
				& \leq f(\o^j_l) -f(\o_2^{j-1}) \label{ine:1} %\ \ \ (\mbox{due to the pairwise-monotonicity})
				\\
				& \leq f(\x^j_l) -f(\x^{j-1}) \label{ine:2}
				%\ \ \ (\mbox{due to the $k$ submodularity})
				\\
				& \leq f(\x^j) -f(\x^{j-1}) \label{ine:3} 
				%\ \ \ (\mbox{due to the selection of $i_j$ in the algorithm})
			\end{align}
			where the inequality~\eqref{ine:1} is due to the pairwise-monotoncity of $f$, the inequality~\eqref{ine:2} is due to the $k$-submodularity of $f$, and the inequality~\eqref{ine:3} is due to the selection rule of the algorithm.
			The proof is completed. 
			\item If $e_j \in supp(\o_2)$. In this case, if $\o_2^{j-1}(e_j)=i_j$. Due to the  pairwise-monotone property of $f$, there exists $i'\in [k]$ that $f(\x^{j-1} \sqcup (e_j, i')) \geq 0$. Therefore, 
			$$f(\o_2^j)-f(\o_2^{j-1})=0\leq  f(\x^j) -f(\x^{j-1})$$ 
			If  $\o_2^{j-1}(e_j)\neq i_j$, we obtain:
			\begin{align*}
				f(\o^{j-1})-f(\o^j) & = 2f(\o^{j-1}) -2f(\o^{j-1/2}) \\
				&- (f(\o^{j-1})+ f(\o^j)- 2f(\o^{j-1/2}) )
				\\
				& \leq 2f(\o^{j-1}) -2f(\o^{j-1/2})\\
				&\leq 2f(\x^{j}) -2f(\x^{j-1})
			\end{align*}
		\end{itemize}
		The last inequality is due to the $k$-submodularity. Overall, we have $f(\o_2^{j-1})-f(\o_2^j) \leq 2f(\x^{j}) -2f(\x^{j-1})$. Therefore,
		\begin{align*}
			f(\o_2)-f(\o_2^t) &= \sum_{j=1}^t(f(\o^{j-1})-f(\o^{j}))  
			\\
			& \leq 2\sum_{j=1}^t(f(\x^{j})-f(\x^{j-1}))  \leq 2f(\x^t) 
		\end{align*}
		
		The proof is completed. 
	\end{proof}
\begin{proof}[Proof of Lemma~\ref{lem:2}]
	Recap $\x=\x^t$. If $c(\x^t) \leq B$, $\x'=\x^t$ and the Lemma holds. Therefore, we must consider the case $c(\x^t)>B$. We get:
	\begin{align}
		f(\x^t) - f(\x^Q) &=\sum_{j=Q+1}^T\Delta_{(e_j,i_j)}f(\x^{j-1})
		\\
		&\geq \sum_{j=Q+1}^T c(e_j)\frac{f(\x^{j-1})}{B} \label{ine:4}
		\\ 
		&\geq \sum_{j=Q+1}^Tc(e_j)\frac{f(\x^Q) }{B}  \label{ine:4-1}
		\\
		&\geq c(\x') \frac{f(\x^Q)}{B}
	\end{align}
	where the inequality~\eqref{ine:4} is due to the selection rule of a tuple $(e, i_j)$ into $\x$ according to the condition  at Line~\ref{cond2} of Algorithm~\ref{alg:1}. The inequality~\eqref{ine:4-1} is due to $f(\cdot)\ge 0$. 
	
	Since $\x'$ is chosen from $\x$ so that its total cost is closest to $B$ and each element $e \in supp(\x)$ has the cost at most $B/2$, thus: $$c(\x') > B- \frac{B}{2}\geq \frac{B}{2}$$
	It implies that $f(\x^t)-f(\x^Q) \geq f(\x^Q)/2$. Hence $f(\x^Q)\leq 2f(\x^t)/3$. In the other hand, due to the $k$-submodularity of $f$ we have $f(\x^t)\leq f(\x^Q)+ f(\x')$. Thus,
	\begin{align}
		f(\x')\geq f(\x^t)-f(\x^Q)\geq \frac{f(\x^t)}{3} 
	\end{align}
	The proof is completed.
\end{proof}
\begin{proof}[Proof of Lemma~\ref{lem:3}]
We have:
\begin{align}
	f(\o_2)-f(\x^t) &= f(\o_2)-f(\o_2^t)+f(\o_2^t)-f(\x^t)\\
	&\le 2f(\x^t) +f(\o_2^t)-f(\x^t) \ \  \ (\mbox{Due to Lemma~\ref{lem:1}})\label{ine:5}\\
	&\le 2f(\x^t)+\sum_{e\in supp(\o_2^t)\setminus supp(\x^t)}\Delta_{(e,i_j),\forall i\in[k]}f(\x^t)\label{ine:6}\\ 
	&\le 2f(\x^t)+ \sum_{e\in supp(\o_2^t)\setminus supp(\x^t)}\frac{c(e)f(\x^t)}{B}\label{ine:7}\\
	&\le 2f(\x^t)+ \frac{Bf(\x^t)}{B}=3f(\x^t)\label{ine:8}
\end{align}
%\begin{align}
%	f(\o_2^t)-f(\s^t) &= f(\u^t \sqcup \s^t)- f(\s^t)
%	= \sum_{i=1}^{r}(f(\u^t_i)-f(\u^t_{i-1}))  \nonumber
%	\\
%	&\leq  \sum_{i=1}^{r}(f(\s^{<{u_i}} \sqcup (u_i, j_i))-f(\s^{<{u_i}})  \label{ine:5}
%		\\
%	&\leq \sum_{i=1}^{r} \Delta_{(u_i, j_i)}f(\s^{<{u_i}}) \label{ine:6}
%		\\
%		& \leq \sum_{i=1}^{r}c(u_i)\frac{f(\s^{<{u_i}})}{B} \label{ine:7}
%		\\
%		& \leq \sum_{i=1}^{r}c(u_i)\frac{f(\s^t)}{B} \label{ine:8}
%		\\
%		&\leq c(\u^t)f \frac{(\s^t)}{B} \leq f(\s^t) 
%		\end{align}
where the inequality%~\eqref{ine:5} and
~\eqref{ine:6} is due to the $k$-submodularity of $f$, %the inequality \eqref{ine:6} due to the definition of $\s^{<{u_i}}$, 
the inequality~\eqref{ine:7} is due to the selection of the algorithm. Thus, we have: $f(\o^t_2) \leq 4f(\x^t)$ or $f(\o^t_2) \leq 4f(\x)$. The proof is completed. 
\end{proof}
\begin{proof}[Proof of Lemma~\ref{lem:4}]
	Applying the Lemmas \ref{lem:1} and \ref{lem:3}, with $j=t$, we have:
	\begin{align*}
		f(\o_2)-f(\x^t) &= f(\o_2) - f(\o_2^t) + f(\o_2^t) - f(\x^t)
		\\
		& \leq 2f(\x^t) + f(\o_2^t) - f(\x^t) 
		\\
		& \leq 5f(\x^t)
	\end{align*}
	Thus $f(\x^t) \geq f(\o_2)/6$. Combine the above implication with the Lemma \ref{lem:2}, we have $f(\x')\geq f(\x^t)/3\geq f(\o_2)/18$.
\end{proof}
\begin{proof}[Proof of Theorem~\ref{theo:alg1}]
The algorithm scans only once over the ground set, and each element $e$ has $k$ queries to find the position $i_e$. Therefore the number of  queries is $nk$.
We now prove the approximation ratio of the algorithm. By the selection of $(e_{m}, i_m)$ and the $\o_1$ contains at most one element so $f(\o_1)\leq f((e_{m}, i_m))$. 
By the definition of $\o_1', \o_2'$ and the $k$-submodularity of $f$, we obtain:
\begin{align}
	f(\o) & \leq   f(\o'_1)+ f(\o'_2) \label{ie-theo1-1}
	\\
	& \leq f(\o_1)+ f(\o_2) \label{ie-theo1-2}
	\\
	& \leq f((e_{m}, i_m))+18f(\x') \leq 19 f(\s) \label{ie-theo1-3}
\end{align}
The proof was completed.  
\end{proof}
\begin{proof}[Proof of Lemma~\ref{lem:alg2.1}] Due to the same selection rule between $(e, i_v)$ of Algorithm~\ref{alg:2} and $(e, i_e)$ of Algorithm~\ref{alg:1}, we have the same result with Lemma~\ref{lem:1}, i.e.,  $f(\o)-f(\o^q)\leq 2f(\s_v)$. Thus:
	\begin{align} 
		f(\o)-f(\s_v) & =f(\o)-f(\o^q) + f(\o^q)-f(\s_v)
		\\
		& \leq  2f(\s_v)  +  \sum_{j=1}^{r}(f(\u_j)-f(\u_{j-1}))  %\eqref{ine:t3-0}
		\\
		& \leq 2f(\s_v)  + \sum_{j=1}^{r}\Delta_{(u_j, i_j)} f(\s_v^{<{u_j}}) \label{ine:t3-1}
		\\
		& \leq 2f(\s_v)  +\sum_{j=1}^{r}c(u_j)\tau_v \label{ine:t3-2}
		\\
		& \leq 2f(\s_v)  +c(\o)\tau_v
		\label{ine:t3-3}
	\end{align}
	where the inequality~\eqref{ine:t3-1} is due to the $k$-submodularity, the  inequality~\eqref{ine:t3-2} is due to the definition of $\s_v^{<o}$, and the inequality \eqref{ine:t3-3} is due to the definition of $\u$ and $\o$. Thus, the proof is completed. 

\end{proof}
\begin{proof}[Proof of Theorem~\ref{theo:alg2}]
		The algorithm needs  $nk$ queries to call $\LAA$ and uses only 1-pass over the ground set for finishing the outer loop (Line 3-11). For each incoming element, it takes at most $k\cdot	\lceil \log_{(1+\epsilon)}(19) \rceil $ queries for updating $\s_v, v\in A$. Combine all tasks, the required number of queries at most: %$$nk+nk \log_{(1+\epsilon)}(19)\leq nk+nk \frac{\log(19)}{\epsilon}=O(\frac{nk}{\epsilon}).$$
	\begin{align*}
		nk+nk	\lceil \log_{(1+\epsilon)}(19) \rceil & \leq nk+nk (1+ \log_{(1+\epsilon)}(19))\\
		&=2nk+ nk \frac{\ln (19)}{\ln(1+\epsilon)} 
		\\
		&\leq 2nk+ nk \frac{\ln (19)}{\ln\frac{1}{1-\frac{\epsilon}{2}}}\\
		&= 2nk- nk \frac{\ln (19)}{\ln(1-\frac{\epsilon}{2})}\\
		&\leq  2nk+\frac{2}{\epsilon} nk\ln(19)=O(\frac{nk}{\epsilon}) 
	\end{align*}
	We now show the approximation ratio of the algorithm.	By Theorem \ref{theo:alg1}, we have $ \Gamma \leq \opt \leq 19. \Gamma$. Therefore, there exists an integer number $v\in A$ so that $ \opt/(1+\epsilon) \leq v \leq \opt $. 
	We have:
	\begin{align}
		f(\s^j_v)=\sum_{i=1}^j (f(\s^{i}_v)-f(\s^{i-1}_v)) \geq \sum_{i=1}^j c(e_i)\theta_v=c(\s^j_v)\theta_v \label{ine:total}
	\end{align}
	%	We denote by $\s_v^{<e}$ as $\s_v$ immediately before $e$ is processed. 
	We consider the following cases:
	\\
	\textbf{Case 1.} There exists an element $o \in supp(\o)\setminus supp(\s_v)$ so that $\Delta_{(o,\o(o))}f(\s_v^{<o})\geq \tau_v$ and $c(\s_v^{<o})+c(o)>B$. Recall $(e_m, i_m)=\arg\max_{e\in V, i\in [k]}f((e, i))$, we have:
	\begin{align}
		f(\s_{final})&\geq \max\{f(\s_v), f((e_{m}, i_{m}))\} 
		\\
		&\geq \max\{f(\s_v^{<o}), f((o, \o(o)))\} 
		\\
		&
		\geq \frac{f(\s_v^{<o}) +f((o, \o(o)))}{2}
		\\
		& \geq \frac{f(\s_v^{<o} \sqcup (o, \o(o))}{2}
		\\
		&=\frac{\Delta_{(o,\o(o))}f(\s_v^{<o}) +f(\s^{<o}_v)}{2}
		\\
		&\geq \frac{\tau_v c(o) +\tau_v c(\s^o_v)}{2}\geq \frac{B\tau_v}{2} = \frac{v}{5}\\
		&\geq\frac{\opt}{5(1+\epsilon)}
		\\
		& \geq (\frac{1}{5}-\epsilon)\opt
	\end{align}
	
	\textbf{Case 2.} There is no such an element $o$ like Case 1.
	By the  Lemma \ref{lem:alg2.1}, $c(\o)\leq B$, $\tau_v=2v/(5B)$, and $f(\o)=\opt \geq v$ we have:
	\begin{align}
		f(\o)&\leq 3f(\s_v)  +B\tau_v\leq 3f(\s_v) +2B v/(5B) \\ 
		&\leq 3f(\s_v) + 2\opt/5
	\end{align}
	It implies: $\opt\leq3f(\s_v) + 2\opt/5$, thus $ \opt\leq 5f(\s_v)$. Finally,  $f(\s_{final})\geq f(\s_v) \geq \opt/5$. By combining the two above cases, we obtain the proof.
\end{proof}
	\end{document}